\documentclass[a4paper,USenglish,cleveref,autoref,numberwithinsect,thm-restate,authorcolumns]{lipics-v2019}

\usepackage{booktabs}
\usepackage{amssymb}
\usepackage{amsmath}
\usepackage{graphicx}
\usepackage[font=small,labelfont=bf]{caption}
\usepackage[labelfont=default]{subcaption}
\usepackage{mathtools}
\mathtoolsset{showonlyrefs=true}
 \usepackage{todonotes}
\usepackage{hyperref}
\usepackage{enumerate}
\usepackage{xspace}
\usepackage{cite}

\newcommand{\mktodo}[2][]{}

\newtheorem{observation}[theorem]{Observation}

\graphicspath{{./figures/}}

\newcommand{\rac}[1]{RAC$_{#1}$\xspace}
\newcommand{\arac}{arc-RAC\xspace}

\newcommand{\fqp}{$4$-quasi-planar\xspace}

\newcommand{\ra}[1]{r({#1})} % radius of a circle
\newcommand{\ce}[1]{C({#1})} % center of a circle

\newcommand{\ch}[1]{ch({#1})} % charge
\newcommand{\chfin}[1]{ch_{\mathrm{fin}}({#1})}
\newcommand{\ord}[2]{\Pi({#1}; {#2})} % charge

\newcommand{\cv}[1]{\bar{#1}} %macros for a covering circle
\newcommand{\img}[1]{{#1}^\circ} %macros for inversion image

% \newcommand{\subtend}[1]{\angle{#1}} % coaxal circles on \arr
 % coaxal circles on \arr

\newcommand\restr[2]{{% we make the whole thing an ordinary symbol
  \left.\kern-\nulldelimiterspace % automatically resize the bar with \right
  #1 % the function
  \vphantom{\big|} % pretend it's a little taller at normal size
  \right|_{#2} % this is the delimiter
}}

\title{Drawing Graphs\\ with Circular Arcs and Right-Angle Crossings}
\titlerunning{Drawing Graphs with Circular Arcs and Right-Angle Crossings}

\author{Steven Chaplick}{Maastricht University, the Netherlands \linebreak University of W\"urzburg,  Germany}{s.chaplick@maastrichtuniversity.nl}{https://orcid.org/0000-0003-3501-4608}{}

\author{Henry~F\"orster}{University of T\"ubingen, Germany}{foersth@informatik.uni-tuebingen.de}{https://orcid.org/0000-0002-1441-4189}{}

\author{Myroslav Kryven}{University of W\"urzburg,  Germany}{myroslav.kryven@uni-wuerzburg.de}{}{}

\author{Alexander Wolff}{University of W\"urzburg,  Germany}{}{https://orcid.org/0000-0001-5872-718X}{}

\authorrunning{S.~Chaplick, H.~F\"orster, M.~Kryven, and A.~Wolff}

\Copyright{Steven Chaplick, Henry F\"orster, Myroslav Kryven, and Alexander Wolff}

\begin{CCSXML}
<ccs2012>
   <concept>
       <concept_id>10002950.10003624.10003633.10003643</concept_id>
       <concept_desc>Mathematics of computing~Graphs and surfaces</concept_desc>
       <concept_significance>300</concept_significance>
       </concept>
   <concept>
       <concept_id>10002950.10003624.10003625.10003626</concept_id>
       <concept_desc>Mathematics of computing~Combinatoric problems</concept_desc>
       <concept_significance>300</concept_significance>
       </concept>
 </ccs2012>
\end{CCSXML}
\ccsdesc[300]{Mathematics of computing~Graphs and surfaces}
\ccsdesc[300]{Mathematics of computing~Combinatoric problems}

\keywords{circular arcs, right-angle crossings, edge density, charging argument}

\category{} %optional, e.g. invited paper

\relatedversion{} %optional, e.g. full version hosted on arXiv, HAL, or other respository/website
%\relatedversion{A full version of the paper is available at \url{...}.}

\supplement{}%optional, e.g. related research data, source code, ... hosted on a repository like zenodo, figshare, GitHub, ...

%\funding{(Optional) general funding statement \dots}%optional, to capture a funding statement, which applies to all authors. Please enter author specific funding statements as fifth argument of the \author macro.

\funding{M.~Kryven acknowledges support from DFG grant WO$\,$758/9-1.}

\nolinenumbers %uncomment to disable line numbering

\hideLIPIcs

\begin{document}

\maketitle

\begin{abstract}
  In a RAC drawing of a graph, vertices are represented by points in
  the plane, adjacent vertices are connected by line segments, and
  crossings must form right angles.  Graphs that admit such drawings
  are RAC graphs.  RAC graphs are beyond-planar graphs and have been
  studied extensively.  In particular, it is known that a RAC graph
  with $n$ vertices has at most $4n-10$ edges.

  We introduce a superclass of RAC graphs, which we call
  \emph{arc-RAC} graphs.  A graph is arc-RAC
  if it admits a drawing where edges are represented by
  circular arcs and crossings form right angles.  We
  provide a Tur\'an-type result showing that an arc-RAC graph with $n$
  vertices has at most $14n-12$ edges and that there are $n$-vertex
  arc-RAC graphs with $4.5n - O(\sqrt{n})$ edges.
\end{abstract}

\section{Introduction}

A \emph{drawing} of a graph in the plane is a mapping of its vertices
to distinct points and each edge $uv$ to a curve whose endpoints are
$u$ and $v$.  Planar graphs, which admit crossing-free drawings, have
been studied extensively.  They have many nice properties and several
algorithms for drawing them are known, see,
e.g.,~\cite{juenger-book,kaufmann-book}.  However, in practice we must
also draw non-planar graphs and crossings make it difficult to
understand a drawing.
For this reason, graph classes with restrictions on crossings are
studied, e.g., graphs that can be drawn with at most $k$ crossings per edge 
(known as \emph{$k$-planar graphs}) or where the angles formed by each crossing are ``large''.
These classes are categorized as \emph{beyond-planar} graphs and have
experienced increasing interest in recent
years~\cite{DBLP:journals/csur/DidimoLM19}.

As introduced by Didimo et al.~\cite{del-dgrac-TCS11}, a prominent
beyond-planar graph class that concerns the crossing angles is the
class of $k$-bend right-angle-crossing graphs, or \emph{\rac{k}}
graphs for short, that admit a drawing where all crossings form
$90^\circ$ angles and each edge is a polygonal chain with at most~$k$
bends.
Using right-angle crossings and few bends is motivated by several cognitive studies
suggesting a positive correlation between large crossing angles or small curve
complexity and the readability of a graph drawing~\cite{Huang2007,Huang2014,Huang2008}. 
Didimo et al.~\cite{del-dgrac-TCS11} studied the edge density of
\rac{k} graphs.  They showed that \rac{0} graphs with $n$ vertices
have at most $4n-10$ edges (which is tight), that \rac{1} graphs have
at most $O(n^{\frac{4}{3}})$ edges, that \rac{2} graphs have at most
$O(n^{\frac{7}{4}})$ edges and that all graphs are \rac{3}.
Dujmovi\'c et al.~\cite{dgmw-nlacg-2010} 
gave an alternative simple proof of the $4n-10$ bound 
for \rac{0} graphs using charging arguments
similar to those of Ackerman and
Tardos~\cite{ackerman-tardos} and Ackerman~\cite{ackerman}.
Arikushi et al.~\cite{afkmt-garacd-CGTA12} improved the upper bounds
to $6.5n-13$ for \rac{1} graphs and to $74.2n$ for \rac{2} graphs. 
The bound of $6.5n-13$ for \rac{1} graphs was also obtained by charging arguments. 
They also provided
a \rac{1} graph with $4.5n - O(\sqrt{n})$ edges.
The best known lower and upper bound for the maximum edge density of
\rac{1} graphs of $5n-10$ and $5.5n-11$, respectively, are due to
Angelini et al.~\cite{abfk-rdgobe-2018}.

We extend the class of \rac{0} graphs by allowing edges to be drawn as circular 
arcs but still requiring $90^\circ$ crossings. 
An angle at which two circles intersect is the angle between the two
tangents to each of the circles at an intersection point. Two circles
intersecting at a right angle are called \emph{orthogonal}.
For any circle~$\gamma$, let $\ce{\gamma}$ be its center and let
$\ra{\gamma}$ be its radius. The following observation follows from the Pythagorean theorem.

\begin{figure}[tb]
  \centering
  \includegraphics{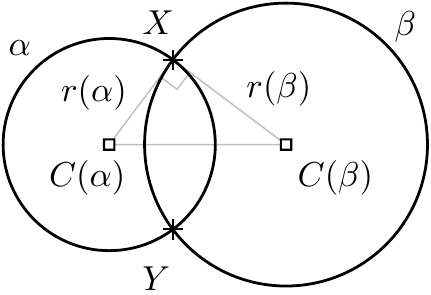}
  \caption{Circles~$\alpha$ and $\beta$ are orthogonal if and only if
    $\triangle X\ce{\alpha}\ce{\beta}$ is right-angled.}
  \label{fig:digon}
\end{figure}
\begin{observation}
  \label{obs:basic}
  Let $\alpha$ and $\beta$ be two circles.
  Then $\alpha$ and $\beta$ are orthogonal if and only if
  $\ra{\alpha}^2 + \ra{\beta}^2 = |\ce{\alpha}\ce{\beta}|^2$;
  see Figure~\ref{fig:digon}.
\end{observation}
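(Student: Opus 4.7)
The plan is to translate the orthogonality condition into a statement about the triangle $\triangle X\ce{\alpha}\ce{\beta}$ suggested by the figure caption, and then invoke the Pythagorean theorem and its converse.

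First, I would fix a point $X$ at which $\alpha$ and $\beta$ intersect; such a point exists whenever it is meaningful to talk about a crossing angle. The key elementary fact I would then invoke is that the tangent to a circle at any of its points is perpendicular to the radius at that point. Applied at $X$, this tells me that the tangent to $\alpha$ at $X$ is perpendicular to the segment $X\ce{\alpha}$, and likewise the tangent to $\beta$ at $X$ is perpendicular to $X\ce{\beta}$.

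Second, I would use the fact that rotating two lines through a common point by the same angle preserves the angle between them. Rotating each tangent by $90^\circ$ about $X$ maps it onto the line through $X$ and the corresponding center, so the two tangents meet at a right angle if and only if the segments $X\ce{\alpha}$ and $X\ce{\beta}$ do. In other words, $\alpha$ and $\beta$ are orthogonal if and only if $\triangle X\ce{\alpha}\ce{\beta}$ has its right angle at the vertex $X$.

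Third, since the legs of $\triangle X\ce{\alpha}\ce{\beta}$ incident to $X$ have lengths $\ra{\alpha}$ and $\ra{\beta}$ while the side opposite $X$ has length $|\ce{\alpha}\ce{\beta}|$, the Pythagorean theorem together with its converse yield the claimed equivalence $\ra{\alpha}^2 + \ra{\beta}^2 = |\ce{\alpha}\ce{\beta}|^2$. There is no genuine obstacle; the only care needed is to keep both directions of the equivalence explicit by appealing to the converse of Pythagoras for the ``if'' direction.
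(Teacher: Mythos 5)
Your argument is correct and is exactly the one the paper intends: the paper offers no written proof, only the remark that the observation ``follows from the Pythagorean theorem'' together with Figure~\ref{fig:digon}, whose caption encodes precisely your reduction to the triangle $\triangle X\ce{\alpha}\ce{\beta}$ being right-angled at $X$. The only detail you might add for the ``if'' direction is that the equation $\ra{\alpha}^2+\ra{\beta}^2=|\ce{\alpha}\ce{\beta}|^2$ already forces the circles to intersect (since $|\ra{\alpha}-\ra{\beta}|<\sqrt{\ra{\alpha}^2+\ra{\beta}^2}<\ra{\alpha}+\ra{\beta}$), so the triangle you apply the converse of Pythagoras to actually exists.
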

In addition we note the following.
\begin{observation}
  \label{obs:tangent-center}
  Given a pair of orthogonal circles, the tangent to one circle at one of the
  intersection points goes through the center of the other circle; see
  Figure~\ref{fig:digon}.  In particular, a line is orthogonal to a
  circle if the line goes through the center of the circle.
\end{observation}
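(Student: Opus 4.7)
The plan is to derive both statements directly from two elementary facts of Euclidean circle geometry: (i) the tangent to a circle at a point~$X$ on it is perpendicular to the radius joining $X$ to the center; and (ii) by the very definition used in the paragraph above the observation, two circles meet orthogonally at $X$ exactly when their tangents at~$X$ are perpendicular.

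For the main statement, let $\alpha$ and $\beta$ be orthogonal circles meeting at an intersection point~$X$, and let $t_\alpha$ and $t_\beta$ denote their tangents at~$X$. By~(ii), $t_\alpha \perp t_\beta$. By~(i) applied to~$\beta$, the radius segment from $\ce{\beta}$ to $X$ is also perpendicular to~$t_\beta$. Hence both $t_\alpha$ and the line through $\ce{\beta}$ and $X$ pass through~$X$ and are perpendicular to the single line~$t_\beta$; since through a point there is a unique line perpendicular to a given line, they must coincide. In particular $t_\alpha$ contains $\ce{\beta}$, which is exactly the claim.

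For the ``in particular'' addendum, suppose a line $\ell$ meets a circle $\gamma$ at a point $X$ and passes through $\ce{\gamma}$. Then $\ell$ contains the radius $\ce{\gamma}X$, and by (i) this radius is perpendicular to the tangent to $\gamma$ at~$X$. So $\ell$ itself is perpendicular to that tangent, which by the convention adopted for the crossing angle between a line and a circle means $\ell$ is orthogonal to $\gamma$.

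I do not expect any real obstacle here: the statement is a purely local configuration at a single intersection point, and the whole argument reduces to the uniqueness of the perpendicular to a given line through a given point. The only thing to be mildly careful about is that the definition of ``orthogonal'' used in the paper is in terms of tangents (stated just before Observation~\ref{obs:basic}), so invoking that definition rather than, say, Observation~\ref{obs:basic} keeps the argument shortest.
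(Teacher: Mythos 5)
Your argument is correct: the paper states this as an unproved observation (referring only to Figure~\ref{fig:digon}), and your derivation---tangent perpendicular to radius, orthogonality defined via tangents, and uniqueness of the perpendicular to $t_\beta$ through $X$---is exactly the standard justification the paper implicitly relies on. Nothing is missing.
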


Similarly, two circular arcs $\alpha$ and $\beta$ are orthogonal if they intersect properly (that is, ignoring intersections at endpoints)
and the underlying circles (that contain the
arcs) are orthogonal.  For the remainder of this paper, all arcs will
be circular arcs.  We consider any straight-line segment to be an arc with 
infinite radius.  Note, though, that the above observations
do not hold for (pairs of) circles of infinite radius.  As in the case of
circles, for any arc~$\gamma$ of finite radius, let $\ce{\gamma}$ be
its center.

We call a drawing of a graph  an \emph{\arac\ drawing} if the edges are
drawn as arcs and any pair of intersecting arcs is orthogonal; see
Figure~\ref{fig:example}.  A graph that admits an
\arac drawings is called an \emph{\arac graph}.
\begin{figure}[tb]
  \centering
  \includegraphics{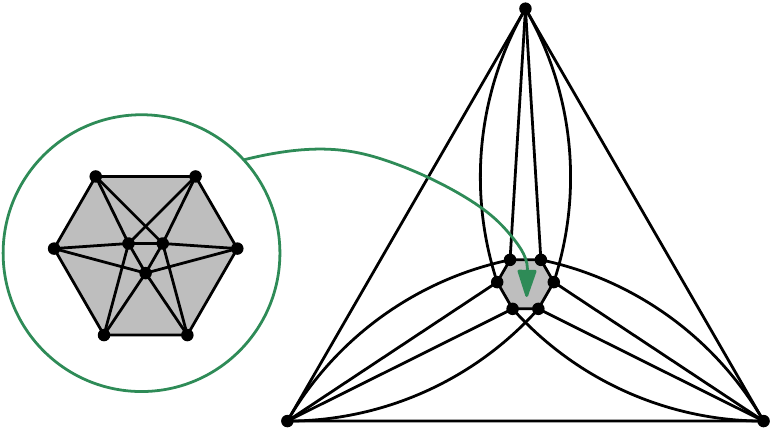}
  \caption{An \arac drawing of a graph. This graph is not \rac{0} \cite{bbhnr-nicpg-DAM17}.}
  \label{fig:example}
\end{figure}

\mktodo{remove the paragraph below when merging into the thesis}
The idea of drawing graphs with arcs dates back to at least the work
of the artist Mark Lombardi who drew social networks, featuring
players from the political and financial sector
\cite{lh-mkgn-IC-2003}.  Indeed, user studies
\cite{phnk-otuflgd-gd-2013,xrph-ausoceigv-DRAI2-12} state that users
prefer edges drawn with curves of small curvature; not necessarily for
performance (that is, tasks such as finding shortest paths,
identifying common neighbors, or determining vertex degrees) but for
aesthetics.  Drawing graphs with arcs can help to
improve certain quality measures of a drawing such as angular
resolution \cite{cdgk-dpgwca-DCG01,aacjr-twca-gd-2012}
or visual complexity \cite{s-dgfa-JGAA15,krw-dgfa-JGAA19}.

An immediate restriction on the edge density of \arac graphs is
imposed by the following known result.

\begin{lemma}[\cite{excursions}]
  \label{lem:no4arcs}
  In an \arac drawing, there cannot be four pairwise orthogonal arcs.
\end{lemma}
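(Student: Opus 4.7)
The plan is to reduce the statement to one about the underlying circles (or lines) of the four arcs, since orthogonality of arcs is defined via orthogonality of the underlying circles. Thus it suffices to prove that no four pairwise orthogonal circles, allowing straight lines as degenerate circles of infinite radius, can coexist in the plane.

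I first handle the generic case in which all four underlying curves are circles of finite radius. Writing $c_i = \ce{\gamma_i}$ and $r_i = \ra{\gamma_i}$, Observation~\ref{obs:basic} rewrites pairwise orthogonality as $|c_i c_j|^2 = r_i^2 + r_j^2$ for all $i \neq j$. Setting $\vec{v}_i = c_i - c_1$ for $i \in \{2,3,4\}$, a short expansion of $|\vec{v}_i - \vec{v}_j|^2 = r_i^2 + r_j^2$ gives $\vec{v}_i \cdot \vec{v}_j = r_1^2$ for $i \neq j$ and $\vec{v}_i \cdot \vec{v}_i = r_1^2 + r_i^2$. Since $\vec{v}_2, \vec{v}_3, \vec{v}_4 \in \R^2$, their Gram matrix
\[
M = \begin{pmatrix} r_1^2 + r_2^2 & r_1^2 & r_1^2 \\ r_1^2 & r_1^2 + r_3^2 & r_1^2 \\ r_1^2 & r_1^2 & r_1^2 + r_4^2 \end{pmatrix}
\]
has rank at most two, so $\det M = 0$. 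A direct expansion, however, gives $\det M = r_1^2 r_2^2 r_3^2 + r_1^2 r_2^2 r_4^2 + r_1^2 r_3^2 r_4^2 + r_2^2 r_3^2 r_4^2$, which is strictly positive---the desired contradiction.

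I then handle the degenerate cases by case analysis on the number $\ell$ of the four arcs that are straight lines, using Observation~\ref{obs:tangent-center}: a line orthogonal to a circle must pass through that circle's centre. If $\ell = 1$, the three remaining circle centres are collinear on that line, so the orthogonality equations collapse to a one-dimensional configuration; up to relabelling, this forces $\sqrt{r_a^2 + r_c^2} = \sqrt{r_a^2 + r_b^2} + \sqrt{r_b^2 + r_c^2}$, whose square yields $0 = 2r_b^2 + 2\sqrt{(r_a^2 + r_b^2)(r_b^2 + r_c^2)}$, impossible for positive radii. If $\ell = 2$, both circle centres must lie on each of the two orthogonal lines, so they coincide at the lines' intersection; the two circles are then concentric and cannot meet, let alone orthogonally. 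If $\ell \geq 3$, we would need three pairwise orthogonal lines in $\R^2$, which is impossible since the plane admits only two orthogonal directions.

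The main obstacle is the all-circles case: merely deriving $\det M = 0$ would be no contradiction, so the crux is the sharper observation that $\det M$ is a manifestly positive sum. The Gram-matrix formulation makes this sign transparent, whereas a coordinate-by-coordinate attempt would drown in algebra. Once that case is settled, Observation~\ref{obs:tangent-center} collapses each line-containing scenario to a low-dimensional subproblem that is easy to dispatch.
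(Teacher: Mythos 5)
Your proof is correct, but note that the paper itself gives no proof of this lemma at all---it is imported verbatim from the cited reference \cite{excursions}---so there is no in-paper argument to compare against. Your all-circles case is sound: the relations $\vec{v}_i\cdot\vec{v}_j=r_1^2$ and $\vec{v}_i\cdot\vec{v}_i=r_1^2+r_i^2$ follow correctly from Observation~\ref{obs:basic}, the Gram matrix of three vectors in $\R^2$ is indeed singular, and $\det M=r_2^2r_3^2r_4^2+r_1^2(r_2^2r_3^2+r_2^2r_4^2+r_3^2r_4^2)>0$ checks out (e.g.\ via the matrix determinant lemma applied to $\mathrm{diag}(r_2^2,r_3^2,r_4^2)+r_1^2\mathbf{1}\mathbf{1}^{\top}$). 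The degenerate cases are also handled completely, including the implicit point that two orthogonal arcs must lie on distinct circles, so all four underlying circles/lines are distinct. The classical route---and the one most consistent with the toolkit the paper actually deploys later (inversion in Lemma~\ref{lem:0-pentagon-at-most-4-triangles}, Apollonian families in Lemma~\ref{lem:edges-0-3})---is to invert at an intersection point of two of the circles: those two become orthogonal lines, and each remaining circle, being orthogonal to both lines, must be centered at their intersection, forcing two concentric circles that cannot be orthogonal. That argument dispatches all cases at once; your Gram-determinant computation is a legitimate, fully elementary alternative that avoids inversion at the cost of an explicit case analysis over the number of straight lines.
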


It follows from Lemma~\ref{lem:no4arcs} that \arac graphs are
\emph{\fqp}, that is, an \arac drawing cannot have four edges that
pairwise cross.  This implies that an \arac graph with $n$ vertices
can have at most $72(n - 2)$ edges~\cite{ackerman}.

Our main contribution is that we reduce this bound to $14n-12$ using
charging arguments similar to those of Ackerman~\cite{ackerman} and
Dujmovi\'c et al.~\cite{dgmw-nlacg-2010}; see
Section~\ref{sec:upper-bound}.  For us, the main challenge was to
apply these charging arguments to a modification
of an \arac drawing and to exploit, at the same time, geometric
properties of the original \arac drawing to derive the bound.  We also
provide a lower bound of $4.5n - O(\sqrt{n})$ on the maximum edge
density of \arac graphs based on the construction of Arikushi et
al.~\cite{afkmt-garacd-CGTA12}; see Section~\ref{sec:lower-bound}.  We
conclude with some open problems in Section~\ref{sec:open-problems}.
Throughout the paper our notation won't distinguish between the
entities (vertices and edges) of an abstract graph and the geometric
objects (points and curves) representing them in a drawing.

As usual for topological drawings, we forbid vertices to lie in the
relative interior of an edge and we do not allow edges to
\emph{touch}, that is, to have a common point in their relative
interiors without crossing each other at this point.  Hence an
\emph{intersection point} of two edges is always a \emph{crossing}.
When we say that two edges \emph{share a point},
we mean that they either cross each other or 
have a common endpoint.

\section{An Upper Bound for the Maximum Edge Density}
\label{sec:upper-bound}

Let $G$ be a \fqp graph, and let $D$ be a \fqp drawing of~$G$.  In his
proof of the upper bound on the edge density of \fqp graphs,
Ackerman~\cite{ackerman} first modified the given drawing so as to remove
faces of small degree.  We use a similar modification that we now
describe.

Consider two edges $e_1$ and $e_2$ in $D$ that intersect 
multiple times.
A region in $D$ bounded by pieces of $e_1$ and $e_2$ that
connect two consecutive crossings
or a crossing and a vertex of $G$
is called a \emph{lens}. 
If a lens is adjacent to a crossing and a vertex of $G$, then we call
such a lens a \emph{1-lens}, otherwise a \emph{0-lens}. 
A lens that does not contain a vertex of~$G$ is \emph{empty}.
\begin{figure}[t]
    \centering
    \includegraphics[page=2]{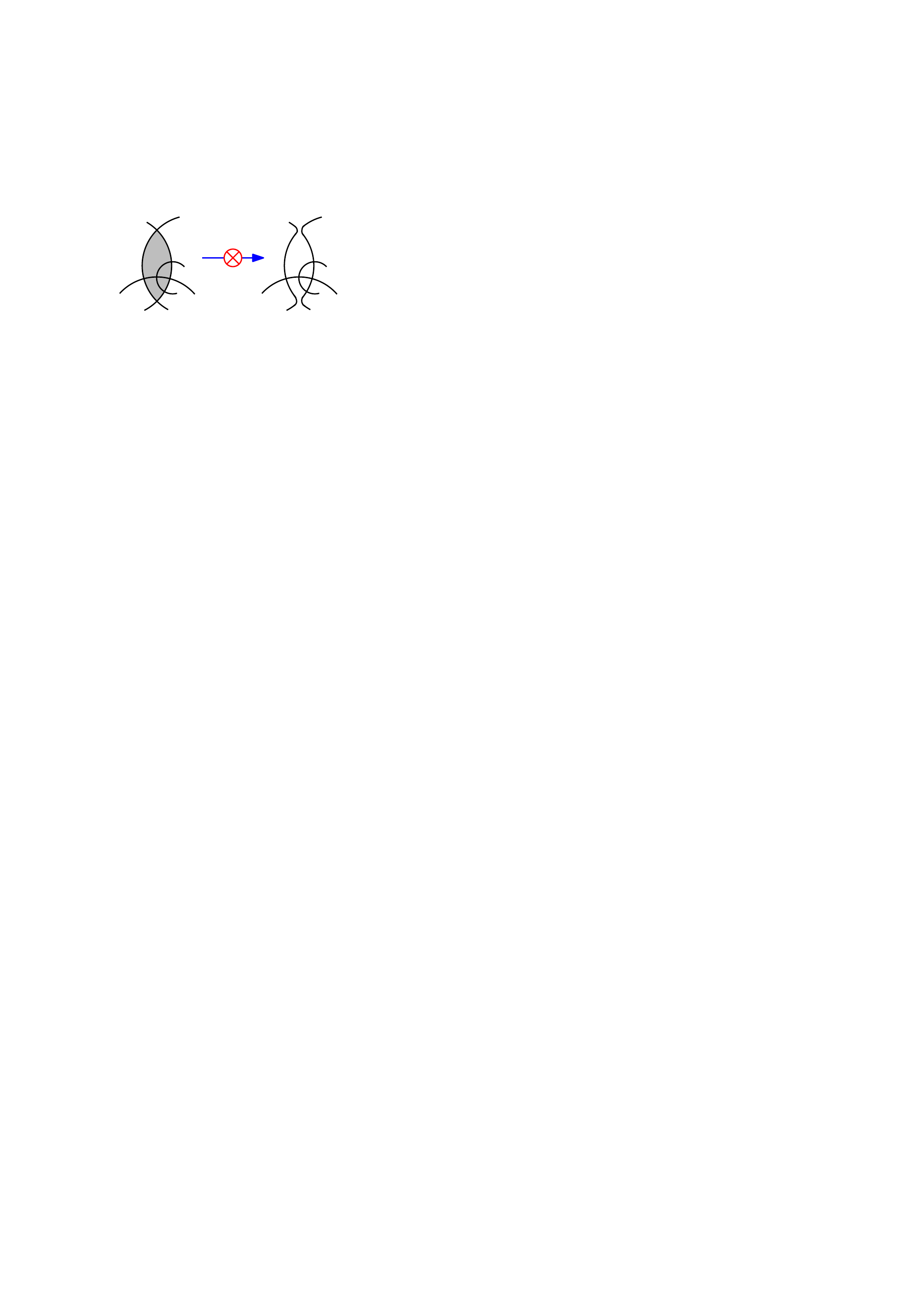}
    \caption{A simplification step resolves a smallest empty 0-lens; 
	     if two edges $e_1$ and $e_2$ change the order in which they cross the edge $e$, they form an empty 0-lens intersecting $e$ before the step, and thus, in the original \fqp drawing.}
    \label{fig:simplification-step}
\end{figure}
Every drawing with 0-lenses
has a \emph{smallest}
empty 0-lens, that is, an empty 0-lens that does not contain any other
empty 0-lenses in its interior.  
We can swap~\cite{prt-rptg-2006, ackerman} the
two curves that bound a smallest empty 0-lens; see
Figure~\ref{fig:simplification-step}.
We call such a swap a \emph{simplification step}.
Since a simplification step resolves a smallest
empty lens,
we observe the following.
\begin{observation}
  \label{obs:no-new-crossings-or-lenses}
  A simplification step does not introduce any new pairs of crossing
  edges or any new empty lenses.
\end{observation}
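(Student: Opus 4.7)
The plan is to exploit the minimality of the smallest empty 0-lens $L$ throughout. Write the boundary of $L$ as $p_1 \subset e_1$ and $p_2 \subset e_2$, both running between the two crossings $c_1$ and $c_2$ that bound $L$; after the step, the modified edges are (up to a small perturbation) new $e_1 = \alpha_1 \cup p_2 \cup \beta_1$ and new $e_2 = \alpha_2 \cup p_1 \cup \beta_2$, where $\alpha_i, \beta_i$ are the parts of old $e_i$ outside $L$. First I would prove a key structural lemma: for every edge $f \notin \{e_1, e_2\}$, $f$ crosses each of $p_1$ and $p_2$ at most once. Indeed, if $u, v$ were two consecutive crossings of $f$ with $\partial L$ both lying on the same arc (say $p_2$), then the sub-arc of $f$ between $u$ and $v$ would lie entirely in $\bar L$, and together with the sub-arc of $p_2$ from $u$ to $v$ it would bound an empty 0-lens strictly inside $L$, contradicting the choice of $L$. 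Combined with the fact that $f$ has no endpoint in the vertex-free interior of $L$, this also yields that $f$ meets $\partial L$ in $0$ or $2$ transversal points.

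For the first half of the observation I would argue by contradiction: suppose a pair $\{e, f\}$ crosses in the new drawing but not in the old one. Only $e_1, e_2$ changed, so without loss of generality $e$ is the new $e_1$ and $f \notin \{e_1, e_2\}$. Crossings of new $e_1$ with $f$ on $\alpha_1 \cup \beta_1$ are unchanged, hence would already be crossings of old $e_1$ with $f$, contradicting the assumption; so every new crossing of new $e_1$ with $f$ lies on $p_2$. Thus $f$ crosses $p_2$ at least once. Because $\{e_1, f\}$ was not a crossing pair before, $f$ does not cross old $e_1$ and in particular not $p_1$. By the parity observation from the key lemma, $f$ then crosses $p_2$ an even number of times, so at least twice, contradicting the key lemma.

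For the second half, let $M$ be an empty lens in the new drawing, bounded by pieces $q \subset a$ and $q' \subset a'$. If neither $a$ nor $a'$ lies in $\{e_1, e_2\}$, then $q$ and $q'$ are the same curves as before and $M$ has the same vertex-free interior, so $M$ was already an empty lens in the old drawing. Otherwise, say $a$ is the new $e_1$, and distinguish by the location of $q$ inside $\alpha_1 \cup p_2 \cup \beta_1$: if $q \subseteq \alpha_1 \cup \beta_1$, then $q$ is a piece of old $e_1$ and $M$ was already an empty lens of $(e_1, a')$; if $q \subseteq p_2$, then $q$ is a piece of old $e_2$ and $M$ was already an empty lens of $(e_2, a')$. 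The delicate case is when $q$ straddles one of the former crossings $c_1, c_2$. Then the endpoints of $q$ are crossings of $a'$ with two different old edges, one lying on $\alpha_1 \cup \beta_1$ (old $e_1$) and one lying on $p_2$ (old $e_2$). Applying the key lemma to $a'$ bounds how $a'$ can enter and leave $L$, and a local analysis of the four sectors around the straddled crossing $c_i$ shows that the sub-arc of $a'$ bounding $M$, together with a sub-arc of $p_1$ or $p_2$, produces an empty 0-lens strictly contained in $L$, contradicting minimality. Hence the straddling case does not occur, and $M$ was already an empty lens of the old drawing.

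The main obstacle is exactly this straddling sub-case. Turning the qualitative picture around $c_i$ into a concrete smaller empty 0-lens inside $L$ requires a careful case distinction according to whether the piece of $a'$ bounding $M$ stays outside $L$ or dips into $L$, combined with the constraints from the key lemma on the number of times $a'$ may cross each of $p_1$ and $p_2$.
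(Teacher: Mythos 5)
The paper itself gives no proof of this observation---it is justified in a single sentence by the fact that the swap is always applied to a \emph{smallest} empty 0-lens, with the operation itself cited from prior work---so you are supplying an argument the authors left implicit, and your central tool (minimality of $L$) is exactly theirs. Your first half is essentially correct, but your ``key lemma'' is not established by the argument you give: two crossings of $f$ with $\partial L$ that are consecutive along $f$ and both lie on $p_2$ need \emph{not} have their connecting sub-arc inside $\overline{L}$; that sub-arc can lie outside $L$ (for instance when $f$ enters through $p_1$, exits through $p_2$, re-enters through $p_2$ and exits through $p_1$), and then the region it bounds with $p_2$ is not contained in $L$, so minimality of $L$ yields no contradiction. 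What your argument really proves is the weaker statement that every maximal sub-arc of $f$ inside $L$ has one endpoint on $p_1$ and one on $p_2$, hence $f$ crosses $p_1$ and $p_2$ equally often. That weaker statement suffices for the first half (an edge that avoids $p_1$ must avoid $p_2$, so no new crossing pair appears), but you should state and use it in that form.

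The genuine gap is in the second half. The ``straddling'' case---an empty lens of the new drawing whose bounding piece of new $e_1$ contains one of the former crossings $c_1,c_2$, so that its two endpoints are a former crossing with old $e_1$ and a former crossing with old $e_2$---is exactly where a genuinely new empty lens could arise, and you do not prove that it cannot: you only assert that ``a local analysis of the four sectors'' produces a smaller empty 0-lens inside $L$, and that assertion moreover leans on the unestablished key lemma. Since you yourself identify this as the main obstacle, the second claim of the observation is not actually proved. A further point to reconcile: the paper later uses the observation pair-specifically (Lemma~\ref{lem:change-order} and Observation~\ref{obs:share-point-no-lens} conclude that two \emph{particular} edges already form an empty 0-lens in $D$), whereas in your subcase $q\subseteq p_2$ the pre-existing lens belongs to the pair $(e_2,a')$ rather than $(e_1,a')$; you would need to argue either that this reassignment is harmless for the downstream lemmas or that this subcase cannot occur.
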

We exhaustively apply simplification steps to our drawing and refer to this as the \emph{simplification process}. 
Observation~\ref{obs:no-new-crossings-or-lenses} guarantees that 
applying the \emph{simplification process} to a drawing~$D$ terminates, that is, it results in 
an empty-0-lens-free drawing~$D'$ of~$G$.  
We call the resulting drawing $D'$ 
\emph{simplified}; it is a \emph{simplification} of~$D$. 
Observation~\ref{obs:no-new-crossings-or-lenses} implies the following
important property of any simplification step.

\begin{observation}
  \label{obs:remains-fqp}
  Applying a simplification step to a \fqp\ drawing yields a \fqp drawing. 
\end{observation}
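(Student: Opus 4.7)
The plan is to derive Observation~\ref{obs:remains-fqp} essentially as a direct corollary of Observation~\ref{obs:no-new-crossings-or-lenses}, by a short argument by contradiction.

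First I would unpack what it means for the simplified drawing to fail to be \fqp: there would exist four edges $f_1, f_2, f_3, f_4$ such that each of the six pairs $f_i f_j$ crosses in the new drawing. The key step is then to invoke Observation~\ref{obs:no-new-crossings-or-lenses}, which asserts that the simplification step introduces no new crossing pairs. Consequently, every one of those six pairs must already have been a crossing pair in the original \fqp drawing~$D$. But then $f_1, f_2, f_3, f_4$ would witness four pairwise crossing edges in $D$, contradicting the assumption that $D$ is \fqp. Hence the simplified drawing is \fqp as well.

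Since a simplification step only modifies two edges $e_1, e_2$ locally near the empty 0-lens they bound, I would spend one sentence justifying that the argument applies no matter which four edges one picks: the relevant property is only that no new pair of crossing edges arises among the whole edge set, not among $e_1, e_2$ alone. This is exactly what Observation~\ref{obs:no-new-crossings-or-lenses} guarantees.

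The main ``obstacle'' is really just bookkeeping: one has to be careful that the contrapositive uses \emph{pairs} of crossing edges (``at least one crossing'') rather than counting crossings, because the swap can reroute how a third edge meets $\{e_1,e_2\}$. However, since each of the six pairs among $f_1,\dots,f_4$ still crosses somewhere, each is a crossing pair in the original drawing, and no further geometric input is needed. Thus the proof reduces to one sentence of set-inclusion reasoning on the set of crossing pairs.
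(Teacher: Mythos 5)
Your argument is correct and is exactly the paper's approach: the paper states this observation as an immediate consequence of Observation~\ref{obs:no-new-crossings-or-lenses} (no new pairs of crossing edges), and your contradiction argument with four pairwise crossing edges is just the explicit unpacking of that one-line justification. No gap.
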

As mentioned above, Ackerman~\cite{ackerman} used a similar
modification to prepare a \fqp drawing for his charging arguments;
note, that unlike Ackerman, we do not resolve 1-lenses.
We look at the simplification process in more detail, in particular, 
we consider how it changes the order in which 
edges cross. 
\begin{lemma}
  \label{lem:change-order}
  Let $D$ be an \arac drawing, and let $D'$ be a simplification
  of~$D$.  If two edges $e_1$ and $e_2$ cross another edge~$e$ in~$D'$
  in an order different from that in~$D$, then $e_1$ and $e_2$ form an
  empty 0-lens intersecting~$e$ in~$D$.
\end{lemma}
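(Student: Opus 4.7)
I would prove the lemma by induction on the number $k$ of simplification steps that transform $D$ into $D'$. The base case $k=0$ is vacuous, since $D=D'$. For the inductive step, let $D''$ denote the drawing obtained one step before $D'$, so that $D''$ is turned into $D'$ by a single swap along a smallest empty 0-lens $\ell$ bounded by some edges $f_1,f_2$.

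The first thing I would establish is a single-step claim: a simplification step changes the order of crossings along an edge only in a very restricted way. Because the swap is a purely local modification confined to the closure of $\ell$ — outside $\ell$ the drawing is unchanged, and inside $\ell$ the pieces of $f_1$ and $f_2$ are exchanged — the sequence of crossings along any edge in $D'$ coincides with that in $D''$, except that if an edge $e$ passes through $\ell$, then the crossings of $e$ with $f_1$ and $f_2$ lying inside $\ell$ get exchanged (the intersection points stay put in the plane, but which curve passes through them is swapped), and in particular their relative order along $e$ is reversed. No other pair of crossings along any edge changes order.

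Given two edges $e_1,e_2$ and an edge $e$ whose crossing order on $e$ differs between $D$ and $D'$, I would split into two cases. If the order already differs between $D$ and $D''$, the inductive hypothesis, applied to $D''$ as a simplification of $D$ via $k-1$ steps, directly yields the required empty 0-lens between $e_1$ and $e_2$ in $D$ that $e$ intersects. Otherwise the order agrees in $D$ and $D''$ but changes in the last step, so by the single-step claim $\{e_1,e_2\}=\{f_1,f_2\}$ and $e$ passes through $\ell$ in $D''$; hence $\ell$ is an empty 0-lens between $e_1$ and $e_2$ in $D''$ that $e$ intersects.

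The main obstacle is the remaining task: transferring the existence of such a lens from $D''$ back to $D$. Here I would invoke Observation~\ref{obs:no-new-crossings-or-lenses} inductively along the $k-1$ preceding steps. Since no step introduces new pairs of crossing edges or new empty lenses, every empty 0-lens in $D''$ is ``inherited'' from $D$: the two crossings between $e_1$ and $e_2$ that bound $\ell$ in $D''$ must already be crossings of $e_1,e_2$ in $D$, enclosing a region that is still an empty 0-lens; and the two crossings of $e$ with $e_1,e_2$ witnessing that $e$ passes through $\ell$ in $D''$ were likewise present in $D$ and lie within that lens. This produces the required empty 0-lens between $e_1$ and $e_2$ in $D$ intersected by $e$ and completes the induction.
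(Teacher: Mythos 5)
Your proposal is correct and follows essentially the same route as the paper: both locate the simplification step at which the crossing order of $e_1$ and $e_2$ along $e$ changes, argue that this forces $e_1$ and $e_2$ to be the two edges bounding the smallest empty 0-lens resolved at that step with $e$ passing through it, and then pull that lens back to $D$ via Observation~\ref{obs:no-new-crossings-or-lenses}. Your explicit induction and single-step locality claim are just a more formal packaging of the paper's argument.
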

\begin{proof}
  Let $e_1$ and $e_2$ be two edges as in the statement of the lemma.
  Then there is a simplification step~$i$ where the order in which
  $e_1$ and~$e_2$ cross~$e$ changes.  Let~$D_i$ be the drawing
  immediately before simplification step~$i$, and let~$D_{i+1}$ be the
  drawing right after step~$i$.  By construction, the order in which
  $e_1$ and $e_2$ cross $e$ is different in~$D_i$ and in~$D_{i+1}$.
  Since $D_i$ is \fqp (see Observation~\ref{obs:remains-fqp}) and
  since we always resolve a smallest empty 0-lens, the edges $e_1$ and
  $e_2$ form a smallest empty 0-lens in $D_i$; see
  Figure~\ref{fig:simplification-step}.  Given that the simplification
  process does not introduce new empty lenses (see
  Observation~\ref{obs:no-new-crossings-or-lenses}), $e_1$ and~$e_2$
  form an empty 0-lens in the original \fqp drawing.
\end{proof}

We now focus on the special type of \fqp drawings we are interested
in.  Suppose that $G$ is an \arac graph, $D$ is an \arac drawing
of~$G$, and $D'$ is a simplification of~$D$.
Note that, in general, $D'$ is not an \arac drawing.
If two edges~$e_1$ and~$e_2$ cross in~$D'$,
then they do not form an empty 0-lens in $D$. 
This holds because for any two edges forming an empty 0-lens in $D$, the
simplification process removes both of their crossings; therefore,
in $D'$ the two edges do not have any crossings.
If $e_1$ and $e_2$ are incident to the same vertex, they also do not
form an empty 0-lens in~$D$,
as otherwise they would share three points in $D$ (the two crossing points of the lens and the common vertex of $G$).
Thus, we have the following observation.
\begin{observation}
  \label{obs:share-point-no-lens}
  Let $D$ be an \arac drawing, and let $D'$ be a simplification of~$D$.
  If two edges $e_1$ and $e_2$ share a point in~$D'$, then they do not
  form an empty 0-lens in~$D$.
\end{observation}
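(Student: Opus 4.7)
My plan is to prove Observation~\ref{obs:share-point-no-lens} by contradiction, directly formalizing the argument sketched in the paragraph preceding it. Since touchings are forbidden, a point shared by $e_1$ and $e_2$ in $D'$ is either a proper crossing or a common endpoint (vertex) of the two edges. I would split the analysis into these two cases and in each assume, for contradiction, that $e_1$ and $e_2$ bound an empty 0-lens $\ell$ in $D$.

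For the crossing case, I would use that $D$ is an \arac drawing, so $e_1$ and $e_2$ lie on two distinct circles, which meet in at most two points. Hence in $D$ the arcs $e_1$ and $e_2$ have at most two interior intersection points, and the two crossings on $\partial \ell$ exhaust them. As argued just before the observation, the simplification process removes both of these crossings, and Observation~\ref{obs:no-new-crossings-or-lenses} guarantees that no new crossings between $e_1$ and $e_2$ can appear. Therefore $e_1$ and $e_2$ cannot cross in $D'$, contradicting the case assumption.

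For the common-endpoint case, let $v$ be the shared vertex. An empty 0-lens $\ell$ between $e_1$ and $e_2$ has two crossings on its boundary, so in $D$ the edges $e_1$ and $e_2$ would share at least three points (the two crossings together with $v$). Since two distinct circles intersect in at most two points, $e_1$ and $e_2$ would have to lie on the same underlying circle. But two co-circular arcs that share an endpoint cannot properly cross at two further interior points—they are either nested, overlap in a longer subarc, or meet only at $v$. This again contradicts the existence of $\ell$.

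The main obstacle is making precise, in the first case, that the simplification process truly removes \emph{both} crossings between $e_1$ and $e_2$. The subtle point is that the process iteratively picks a \emph{smallest} empty 0-lens to dissolve, so nested empty 0-lenses formed by other edge pairs may be resolved before the lens $\ell$ itself. I would close this gap by induction along the simplification sequence, using termination (no empty 0-lens remains in $D'$) together with Observation~\ref{obs:no-new-crossings-or-lenses} (no fresh crossings between $e_1$ and $e_2$ ever arise) to conclude that every original crossing on $\partial \ell$ must eventually disappear.
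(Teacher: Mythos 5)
Your proposal is correct and follows essentially the same route as the paper, which disposes of the observation in the paragraph immediately preceding it: in the crossing case, an empty 0-lens in $D$ would have both of its crossings removed by the simplification process (and no new ones can appear by Observation~\ref{obs:no-new-crossings-or-lenses}), so the edges could not cross in $D'$; in the common-endpoint case, a 0-lens would force three shared points, impossible for two circular arcs. Your added care about nested lenses and the co-circular subcase only makes explicit what the paper leaves implicit.
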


In the following, we first state the main theorem of this section and provide the structure of its proof (deferring one small lemma and the main technical lemma until later). 
Then, we prove the remaining technical details in Lemmas~\ref{lem:1-lenses} to~\ref{lem:0-pentagon-at-most-4-triangles-main} to establish the result.

\begin{theorem}
  \label{thm:density_upper_bound}
  An \arac graph with $n$ vertices can have at most $14n-12$ edges.
\end{theorem}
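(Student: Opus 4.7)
The plan is to follow the discharging framework of Ackerman~\cite{ackerman} and Dujmovi\'c et al.~\cite{dgmw-nlacg-2010}, but applied to the simplification~$D'$ rather than to~$D$ itself, with rules tuned to exploit the arc-RAC geometry of the underlying drawing~$D$ through Lemmas~\ref{lem:no4arcs} and~\ref{lem:change-order} and Observations~\ref{obs:basic} and~\ref{obs:tangent-center}.

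First I would planarize $D'$: introduce a degree-$4$ dummy vertex at every crossing of~$D'$ and let $\widetilde{D}$ denote the resulting plane multigraph. Writing~$c$ for the number of crossings in~$D'$, the graph~$\widetilde{D}$ has $n+c$ vertices and $m+2c$ edges, and by Euler's formula it has $F=m+c-n+2$ faces. I would then assign each face~$f$ the initial charge $d(f)-4$ (where $d(f)$ is the length of its facial walk), together with suitable vertex charges on the original vertices of~$G$, so that the total charge is a linear function of $m$ and~$n$; the constants can be arranged so that the target bound $m\le 14n-12$ follows from showing that, after discharging, every face and vertex ends with charge at least some uniform lower bound.

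The discharging rules will push charge from faces of degree at least~$5$ to adjacent triangular faces across shared edges, and will treat $1$-lenses (which are \emph{not} removed by the simplification process) as reservoirs, since the vertex they contain forbids certain triangle patterns in their vicinity. The bulk of the work is to verify a sequence of local structural lemmas culminating in the pentagon lemma (Lemma~\ref{lem:0-pentagon-at-most-4-triangles-main}): a $1$-lens lemma (Lemma~\ref{lem:1-lenses}) showing that $1$-lenses are non-adjacent or otherwise tame, companion statements bounding the triangles incident to each quadrilateral face, and finally the statement that every empty pentagonal face in $\widetilde{D}$ is bounded by at most four triangles. Each such lemma is proved by contradiction: a dense cluster of triangles around a single face forces either four pairwise orthogonal arcs in~$D$ (contradicting Lemma~\ref{lem:no4arcs}) or a pair of arcs that cross a third arc in a different order in~$D'$ than they did in~$D$, so that Lemma~\ref{lem:change-order} exhibits an empty $0$-lens in~$D$ intersecting the third arc; but such a lens would induce two crossings between edges that still share a point in~$D'$, contradicting Observation~\ref{obs:share-point-no-lens}.

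The hard part will be this pentagon lemma together with the orthogonality accounting it requires: one has to enumerate the possible combinatorial patterns of five arcs bounding a pentagon with triangles in several adjacent cells, and in every pattern produce either the forbidden fourth mutually orthogonal arc (using Observation~\ref{obs:basic} to relate the radii and inter-center distances of the underlying circles, and Observation~\ref{obs:tangent-center} to line up tangents with centers) or a crossing reordering ruled out by Lemma~\ref{lem:change-order}. Once all the local adjacency bounds are in hand, substituting them into the discharging inequalities yields a non-negative total charge, and the bound $m\le 14n-12$ drops out.
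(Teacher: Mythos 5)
Your high-level framework matches the paper's: simplify $D$ to remove empty $0$-lenses, planarize $D'$, run an Ackerman/Dujmovi\'c-style discharging with local lemmas bounding how many triangles a quadrilateral or pentagon can feed, and use the geometry of the underlying arc-RAC drawing $D$ to kill the bad pentagon configuration. But there is a genuine gap at exactly the step you flag as the hard part: the pentagon lemma cannot be closed by either of the two mechanisms you name. A $0$-pentagon feeding four triangles does \emph{not} force four pairwise crossing (let alone pairwise orthogonal) arcs, because the pairs $(e_{i-1},e_{i+1})$ bounding the charged triangles may meet at a \emph{vertex} of $G$ rather than at a crossing (the triangles can be $1$-triangles), so Lemma~\ref{lem:no4arcs} and $4$-quasi-planarity never apply to this configuration. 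Likewise, Lemma~\ref{lem:change-order} and Observation~\ref{obs:share-point-no-lens} do not by themselves produce a contradiction; in the paper they serve only the preparatory purpose of showing that the order in which the five arcs share points along each other is the \emph{same} in $D$ as in $D'$ (Lemma~\ref{lem:edges-order}). The actual contradiction requires a geometric argument that your sketch does not contain: one first shows via Apollonian-circle families that $e_0$ and $e_3$ cannot cross (Lemma~\ref{lem:edges-0-3}), then inverts the configuration in a small circle centered at the common point of $e_2$ and $e_4$, turning those two arcs into rays, and uses Observation~\ref{obs:tangent-center} to pin the centers of the remaining arcs to those rays; a convexity and containment analysis then forces the circle of $\img{e}_1$ to lie strictly inside that of $\img{e}_3$, contradicting that they must share the point $\img{v}_{13}$. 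Without an argument of this kind the discharging does not close.

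Two secondary points. First, your initial charge $d(f)-4$ omits the $v(f)$ term: the paper charges $|f|+v(f)-4$ per face plus $16/3$ per vertex of $G$ and aims for a final face charge of at least $v(f)/3$, so that summing over faces counts vertex--face incidences and turns the total charge $28n/3-8$ into $2|E|/3$, whence $|E|\le 14n-12$; with $d(f)-4$ alone the bookkeeping does not come out. Second, the bound you need for pentagons is ``contributes to at most \emph{three} triangles'' (a $0$-pentagon starts with one unit and pays $1/3$ per triangle), not four; and the $1$-lens lemma is itself geometric rather than a tameness statement: the two arcs of an empty $1$-lens are orthogonal at their crossing, hence meet at $90^\circ$ at the shared vertex as well, so at most four non-overlapping such lenses surround any vertex, which is what keeps the vertex charge of $16/3$ non-negative.
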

\begin{proof}
  Let $G=(V, E)$ be an \arac graph, let~$D$ be an \arac drawing of~$G$,
  let $D'$ be a simplification of~$D$, and let $G' = (V', E')$ be
  the planarization of $D'$.  Our charging argument consists of three
  steps.

  First, each face~$f$ of $G'$ is assigned an initial charge
  $\ch{f} = |f| + v(f) - 4$, where $|f|$ is the degree of~$f$ in the
  planarization and $v(f)$ is the number of vertices of~$G$ on the
  boundary of~$f$.  Applying Euler's formula several times, Ackerman
  and Tardos~\cite{ackerman-tardos} showed that
  $\sum_{f\in G'} \ch{f} = 4n-8$, where $n$ is the number of vertices
  of~$G$.  In addition, we set the charge $\ch{v}$ of a vertex $v$ of
  $G$ to $16/3$.  Hence the total charge of the system is
  $4n-8 + 16n/3=28n/3-8$.

  In the next two steps (described below),
  similarly to Dujmovi\'c et al.~\cite{dgmw-nlacg-2010}, we
  redistribute the charges among faces of $G'$ and vertices of $G$ so
  that, for every face~$f$, the final charge $\chfin{f}$
  is at least $v(f)/3$ and the final charge of each vertex is
  non-negative.  Observing that
  \[
    28n/3-8 \ge \sum_{f\in G'} \chfin{f} \ge \sum_{f\in G'}
    v(f)/3 = \sum_{v\in G} \deg(v)/3 = 2|E|/3
  \]
  yields that the number of edges of $G$ is at most $14n-12$ as
  claimed.  (The second-last equality holds since both sides count the
  number of vertex--face incidences in~$G'$.)

After the first charging step above, it is easy to see that $\ch{f} \ge
v(f)/3$ holds if $|f| \ge 4$.
We call a face $f$ of $G'$ a $k$-triangle, $k$-quadrilateral, or $k$-pentagon
if $f$ has the corresponding shape and $v(f) = k$.
Similarly, we call a face of degree two a \emph{digon}.
Note that any digon is a 1-digon
since all empty 0-lenses have been simplified.

After the first charging step, each digon and each 0-triangle has a
charge of~$-1$, and each 1-triangle has a charge of~$0$.
Thus, in the second charging step, we need to find ${4}/{3}$ units of charge
for each digon, one unit of charge for each 0-triangle, and ${1}/{3}$
unit of charge for each 1-triangle. Note that all other faces including 
2- and 3-triangles already have sufficient charge.

To charge a digon $d$ incident to a vertex $v$ of~$G$, we decrease
$\ch{v}$ by $4/3$ and increase $\ch{d}$ by $4/3$; see
Figure~\ref{fig:charging-vertex}.  We say that~$v$ \emph{contributes}
charge to~$d$.

To charge triangles, we proceed similarly to Ackerman~\cite{ackerman}
and Dujmovi\'c et al.~\cite[Theorem~7]{dgmw-nlacg-2010}.

Consider a $1$-triangle $t_1$.  Let $v$ be the unique vertex incident
to $t_1$, and let $s_1\in E'$ be the edge of $t_1$ opposite of~$v$;
see Figure~\ref{fig:flank}.  Note that the endpoints of $s_1$ are
intersection points in $D'$.  Let $f_1$ be the face on the other side
of $s_1$.  If $f_1$ is a $0$-quadrilateral, then we consider its edge
$s_2 \in E'$ opposite to $s_1$ and the face $f_2$ on the other side of
$s_2$.  We continue iteratively until we meet a face $f_k$ that is not
a $0$-quadrilateral.
If $f_k$ is a triangle, then all the faces $t_1,f_1,f_2, \dots, f_k$
belong to the same empty 1-lens $l$  
incident to the vertex~$v$ of~$t_1$.  In this case, we
decrease $\ch{v}$ by $1/3$ and increase $\ch{t_1}$ by $1/3$; see
Figure~\ref{fig:charging-vertex}.  Otherwise, $f_k$ is not a triangle
and $|f_k| + v(f_k) - 4 \ge 1$ (see Figure~\ref{fig:flank}).  In this
case, we decrease $\ch{f_k}$ by $1/3$ and increase $\ch{t_1}$ by $1/3$.
We say that the face $f_k$ contributes charge to the triangle~$t_1$
\emph{over} its side $s_k$.

For a 0-triangle~$t_0$, we repeat the above charging over each side.
If the last face on our path is a triangle~$t'$, then $t_0$ and $t'$ are
contained in an empty 1-lens 
(recall that $D'$ does not contain
empty 0-lenses)
and $t'$ is a 1-triangle incident to 
a vertex~$v$ of $G$.  In
this case, we decrease $\ch{v}$ by $1/3$ and increase $\ch{t_0}$ by
$1/3$; see Figure~\ref{fig:vertex-charging-0-triangle}.

Thus, at the end of the second step, the charge of each digon and triangle $f$ is at least $v(f)/3$.  Note that the charge of~$f$ comes
either from a higher-degree face or from a vertex~$v$ incident to an
empty 1-lens containing~$f$.

In the third step, we do not modify the charging any more, but we need
to ensure that
\begin{enumerate}[(i)]
\item $\ch{f} \ge v(f)/3$ still holds for each face $f$ of $G'$ with
  $|f| \ge 4$ and
\item $\ch{v} \ge 0$ for each $v$ of $G$.
\end{enumerate}

We first show statement~(i).
Ackerman~\cite{ackerman} noted that a face $f$ with $|f| \ge 4$ can
contribute charges over each of its edges at most once.
Moreover, $f$ can contribute at most one third unit of charge over
each of its edges. Therefore, if $|f| + v(f) \ge 6$, then in the worst
case (that is, $f$ contributes charge over each of its edges) $f$
still has a charge of $|f| + v(f) - 4 - |f|/3 \ge v(f)/3$.
Thus, it remains to
verify that 1-quadrilaterals and 0-pentagons, which initially had only
one unit of charge, have a charge of at least $1/3$ unit or zero,
respectively, at the end of the second step.

A 1-quadrilateral~$q$ can contribute charge to at most two triangles
since the endpoints of any edge of $G'$ over which a face contributes
charge must be intersection points in $D'$; see
Figure~\ref{fig:charging-1-quadrilateral} and recall that~$q$ now
plays the role of $f_k$ in Figure~\ref{fig:flank}.

A 0-pentagon cannot contribute charge to more than three triangles;
see Lemma~\ref{lem:0-pentagon-at-most-4-triangles-main}.

Now we show statement~(ii).  Recall that a vertex $v$ can contribute
charge to a digon incident to~$v$ or to at most two triangles
contained in an empty 1-lens incident to~$v$.  Observe that two empty
1-lenses with either triangles or a digon taking charge from $v$
cannot overlap; see Figure~\ref{fig:charging-vertex}.  
We show in Lemma~\ref{lem:1-lenses} that
$v$ cannot be incident to more than four such empty 1-lenses.  In the
worst case, $v$ contributes ${4}/{3}$ units of charge to each of the
at most four incident digons representing these empty 1-lenses.  Thus,
$v$ has non-negative charge at the end of the second step. 
\end{proof}

\begin{figure}[tb]
    \begin{subfigure}[b]{0.24\textwidth}
      \centering
      \includegraphics{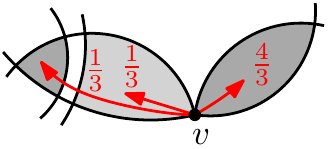}
      \caption{Vertex $v$ contributes charge to a digon and two
        triangles contained in empty 1-lenses.}
      \label{fig:charging-vertex}
     \end{subfigure}
     \hfill
     \begin{subfigure}[b]{0.18\textwidth}
      \centering
      \includegraphics[page=1]{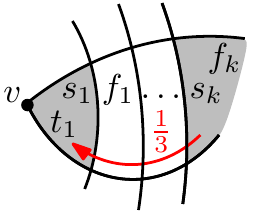}
       \caption{Face $f_k$ contributes charge to the 1-triangle $t_1$.}
       \label{fig:flank}
    \end{subfigure}
    \hfill
    \begin{subfigure}[b]{0.19\textwidth}
      \centering
      \includegraphics[page=2]{flank}
       \caption{Vertex $v$ contributes charge to the 0-triangle $t_0$.}
       \label{fig:vertex-charging-0-triangle}
    \end{subfigure}
    \hfill
    \begin{subfigure}[b]{0.22\textwidth}
      \centering
      \includegraphics{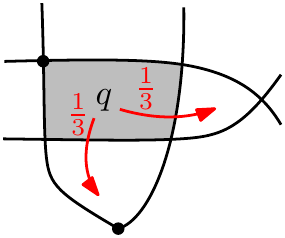}
      \caption{1-quadrilateral $q$ contributes charge to at most two
        triangles.}
       \label{fig:charging-1-quadrilateral}
     \end{subfigure}
     \caption{Transferring charge from vertices and high-degree faces
       to small-degree faces.}
     \label{fig:charging}
\end{figure}

\begin{lemma}
  \label{lem:1-lenses}
  In any simplified \arac drawing, each vertex is incident to at most
  four non-overlapping empty 1-lenses.
\end{lemma}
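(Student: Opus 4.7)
The plan is to show that each empty 1-lens incident to a vertex~$v$ spans exactly a $90^\circ$ angular sector at~$v$, so that pairwise non-overlapping such lenses correspond to interior-disjoint $90^\circ$ sectors around~$v$, of which at most four fit into the $360^\circ$ around~$v$.

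Suppose an empty 1-lens at~$v$ is bounded by arcs $e_1$ and $e_2$ sharing the vertex~$v$ and a crossing point~$X$. First I would establish that $e_1$ and $e_2$ are orthogonal at~$v$, and not only at~$X$. This rests on the standard fact that two curves of the kinds we allow --- two circles, or a line and a circle --- meeting at two distinct points do so at the same angle at both points: a suitable reflection (across the line through the two centers in the circle--circle case, or across the perpendicular from the circle's center to the line in the line--circle case) preserves both curves and swaps the two intersection points, so the local configurations at~$v$ and at~$X$ are mirror images of each other. Since $e_1$ and $e_2$ are orthogonal at~$X$ by the \arac property, they are also orthogonal at~$v$. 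The degenerate case of two straight-line segments sharing two points does not arise, because two distinct lines meet in at most one point.

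Next, I would observe that the 1-lens is a topological disk whose boundary near~$v$ consists of the two arcs $e_1$ and $e_2$ crossing transversally at~$v$. Locally at~$v$, the lens therefore fills exactly one of the four $90^\circ$ sectors cut out by the tangent lines of $e_1$ and $e_2$ at~$v$. If several empty 1-lenses at~$v$ were pairwise non-overlapping, their $90^\circ$ sectors at~$v$ would also be pairwise interior-disjoint; hence their number is at most $360^\circ / 90^\circ = 4$.

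The main subtlety, rather than a true obstacle, is the uniform justification of the equal-angle claim across the different radius regimes, but this is an elementary symmetry argument and does not threaten the proof. Note also that emptiness of the lens is not used in this argument: the bound on the number of $1$-lenses around~$v$ is already forced by the right-angle crossing condition at both endpoints.
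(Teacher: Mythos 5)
Your proof is correct and follows essentially the same route as the paper's: the lens's other endpoint is a crossing, hence a right angle, hence the two arcs also meet at $90^\circ$ at~$v$, and at most four interior-disjoint $90^\circ$ sectors fit around a point. You actually supply more detail than the paper does (the reflection argument for equal angles at both intersection points, and the check that the straight-line/straight-line case cannot occur), which the paper leaves implicit.
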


\begin{proof}
  Let $v$ be a vertex incident to some non-overlapping empty 1-lenses. 
  Consider a small neighborhood of the vertex $v$ in the simplified drawing and notice that in this neighborhood
  the simplified drawing is the same as the original \arac drawing. Let $l$ be one of
  the non-overlapping empty 1-lenses incident to $v$.  Then $l$ forms an angle of $90^\circ$ between the
  two edges incident to~$v$ that form $l$; see
  Figure~\ref{fig:1-lenses}.  This is due to the fact that the other
  ``endpoint'' of~$l$ is an intersection point where the two
  edges must meet at $90^\circ$.  Thus $v$ is incident to at most four non-overlapping
  empty 1-lenses.
\end{proof}

\begin{figure}[tb]
  \begin{minipage}[b]{.35\textwidth}
    \centering
    \includegraphics{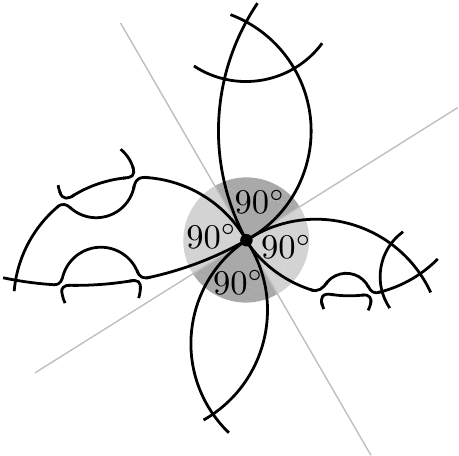}
    \caption{The edges of an empty 1-lens
      form a $\pi/2$ angle at the vertex of the lens.}
    \label{fig:1-lenses}
  \end{minipage}
  \hfill
  \begin{minipage}[b]{.6\textwidth}
    \centering
    \includegraphics{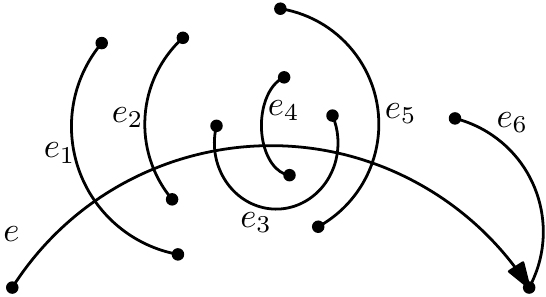}
    \caption{The relation $\ord{\cdot \;}{\,\cdot}$ does not
      necessarily describe \emph{all} intersection points along an
      edge.  Here, e.g., $\ord{e}{e_1, e_2, e_3, e_4,e_3, e_5, e_6}$
      and $\ord{e}{e_1, e_3, e_4, e_5}$ both hold.}
    \label{fig:order}
  \end{minipage}
\end{figure}

We now set the stage for proving
Lemma~\ref{lem:0-pentagon-at-most-4-triangles-main}, which shows that
a 0-pentagon in a simplified drawing does not contribute charge to 
more than three triangles.
The proof goes by a contradiction.
Consider a 0-pentagon that contributes
charge to at least four triangles in the simplified drawing. 
First, we examine which edges of this 0-pentagon cross; see Lemma~\ref{lem:edges-0-3}. 
We then describe the order in which these edges 
share points in the simplified drawing and show that
the original \arac drawing must adhere to the same order; see
Lemma~\ref{lem:edges-order}.
Finally, we use geometric arguments to show that,
under these order constraints,
an \arac drawing of the edges does not exist; see
Lemma~\ref{lem:0-pentagon-at-most-4-triangles}.

Let $D$ be an \arac drawing of some \arac graph $G = (V, E)$, let $D'$
be its simplification, and let $p$ be a 0-pentagon that contributes
charge to at least four triangles.  Let $s_0, s_1, \dots, s_4$ be the
sides of $p$ in clockwise order and denote the edges of $G$ that
contain these sides as~$e_0, e_1, \dots, e_4$ so that edge~$e_0$
contains side~$s_0$ etc.  Since $p$ contributes charge over at least four
sides, these sides are consecutive around $p$.  Without loss of
generality, we assume that $s_4$ is the side over which $p$ does not
necessarily contribute charge.

For $i \in \{ 0, 1, 2, 3\}$, let $t_i$ be the triangle that gets charge
from~$p$ over the side~$s_i$.
The triangle $t_i$ is bounded by the edges $e_{i-1}$ and $e_{i+1}$.
(Indices are taken modulo $5$.)
Note that all faces bounded by $e_{i-1}$ and $e_{i+1}$ that are between 
$t_i$ and $p$ must be 0-quadrilaterals. 
If $t_i$ is a 1-triangle, then $e_{i-1}$ and $e_{i+1}$ are incident to
the same vertex of the triangle.  Otherwise, $t_i$ is a 0-triangle and
$e_{i-1}$ and $e_{i+1}$ cross at a vertex of the triangle.
Let $A'_{i-1, i+1}$ denote this common point of $e_{i-1}$ and $e_{i+1}$, and let
$E_p=\{e_0, \dots, e_4\}$; see Figure~\ref{fig:0-pentagon-D'}.

\begin{figure}[tb]
    \begin{subfigure}[b]{0.48\textwidth}
      \centering
      \includegraphics[page=1]{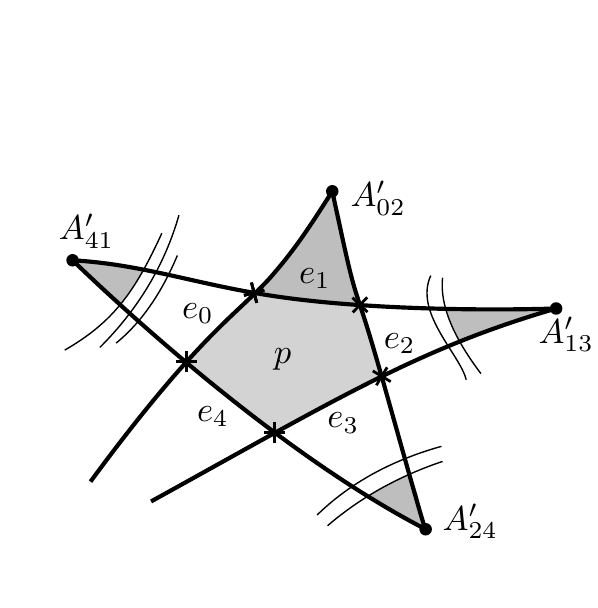}
      \caption{Notation: A 0-pentagon $p$ in $D'$ and the edges
        in~$E_p$.  The points of type $A'_{i-1,i+1}$ are either
        intersection points or vertices of $G$.}
      \label{fig:0-pentagon-D'}
    \end{subfigure}
    \hfill
    \begin{subfigure}[b]{0.44\textwidth}
      \centering
      \includegraphics[page=1]{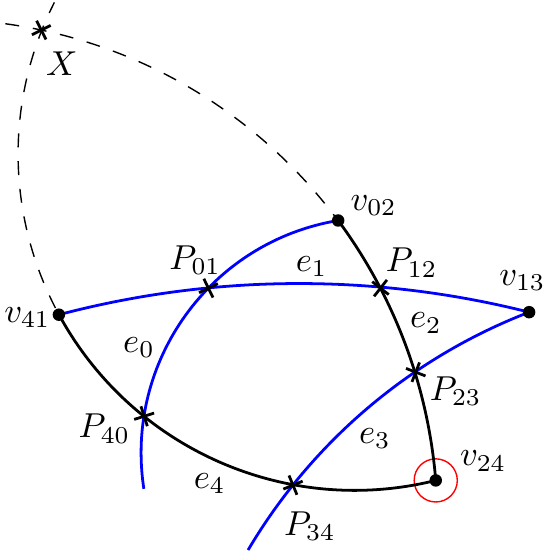}
      \caption{Also in $D$, it holds that
        $\ord{e_0}{e_{4}, e_{1}, e_{2}}$,
        $\ord{e_3}{e_{1}, e_{2}, e_{4}}$, and, for
        $i \in \{1, 2, 4\}$,
        $\ord{e_i}{e_{i-2}, e_{i-1}, e_{i+1}, e_{i+2}}$.}
      \label{fig:0-pentagon-D}
    \end{subfigure}
    \caption{A $0$-pentagon cannot contribute charge to more than three
      triangles.}
    \label{fig:0-pentagon}
\end{figure}

We now describe the order in which the edges in $E_p$ share points in $D'$.
To this end, we orient the edges in~$E_p$ so that this orientation conforms with
the orientation of a clockwise walk around the boundary of $p$ in $D'$. In addition, we
write $\ord{e_k}{e_{i_1}, e_{i_2}, \dots, e_{i_l}}$ if the edge~$e_k$ shares points
(either crossing points or vertices of the graph)
with the edges $e_{i_1}, e_{i_2}, \dots, e_{i_l}$ in this order with
respect to the orientation of~$e_k$; see Figure~\ref{fig:order}.
(Note that we can have $\ord{e_k}{e_i, e_j, e_i}$ as edges may
intersect twice. We will not consider more than two edges sharing the same endpoint.)  
Due to the order in which we numbered the edges in~$E_p$, it holds
in~$D'$ that $\ord{e_0}{e_{4}, e_{1}, e_{2}}$,
$\ord{e_3}{e_{1}, e_{2}, e_{4}}$, and, for $i \in \{1,2,4\}$,
$\ord{e_i}{e_{i-2}, e_{i-1}, e_{i+1}, e_{i+2}}$; see
Figure~\ref{fig:0-pentagon-D'}.
Now we show that in~$D$ the order is the same.
Obviously every pair of edges $(e_{i-1}, e_{i+1})$ that shares an
endpoint in~$D'$ also shares an endpoint in~$D$.
Furthermore, every pair $(e_i, e_{i+1})$ or
$(e_{i-1}, e_{i+1})$ of crossing edges crosses in~$D$, too,
because the simplification process does not introduce new pairs of
crossing edges; see Observation~\ref{obs:no-new-crossings-or-lenses}.

\begin{lemma}
  \label{lem:edges-0-3}
  In the drawing $D$, the edges $e_{0}$ and $e_{3}$ do not cross.
\end{lemma}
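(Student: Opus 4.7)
I would argue by contradiction. Suppose that $e_0$ and $e_3$ cross in $D$. Since $D$ is arc-RAC, any such crossing is at a right angle, so the underlying circles $\gamma_0$ and $\gamma_3$ are orthogonal. Two distinct circles meet in at most two points, and since $e_0$ and $e_3$ share no point in $D'$ while the simplification process never creates new crossings (Observation~\ref{obs:no-new-crossings-or-lenses}), the edges $e_0$ and $e_3$ cross exactly twice in $D$. Moreover, the five corners of the $0$-pentagon $p$ are proper crossings in $D'$ (no vertex of $G$ lies on $p$'s boundary) and hence also in $D$, so every pair of consecutive pentagon edges $e_i, e_{i+1}$ (indices mod~$5$) is pairwise orthogonal in $D$.

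The plan is then to exhibit four pairwise orthogonal arcs among $\{e_0,\dots,e_4\}$, contradicting Lemma~\ref{lem:no4arcs}. Adding the hypothetical orthogonality $e_0 \perp e_3$ to the five pentagon-corner orthogonalities, each of the following $4$-tuples needs only two further orthogonalities to be pairwise orthogonal: $\{e_0,e_1,e_2,e_3\}$ needs $e_0 \perp e_2$ and $e_1 \perp e_3$; $\{e_0,e_1,e_3,e_4\}$ needs $e_1 \perp e_3$ and $e_1 \perp e_4$; $\{e_0,e_2,e_3,e_4\}$ needs $e_0 \perp e_2$ and $e_2 \perp e_4$. The four potential missing orthogonalities correspond exactly to the shared points $A'_{0,2}, A'_{1,3}, A'_{2,4}, A'_{1,4}$, each of which is either a proper crossing in $D'$ (hence a proper right-angle crossing in $D$ by Observation~\ref{obs:no-new-crossings-or-lenses}, yielding the orthogonality of the corresponding arcs) or a common vertex of $G$ (providing no orthogonality). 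A case analysis shows that whenever at least one of the pairs $(A'_{0,2},A'_{1,3})$, $(A'_{1,3},A'_{1,4})$, or $(A'_{0,2},A'_{2,4})$ consists of two proper crossings, one of the above $4$-tuples is pairwise orthogonal and we are done.

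The main obstacle is the residual case where enough $A'$-points are common vertices of $G$ that none of the three $4$-tuples is immediately pairwise orthogonal — for instance, when both $A'_{0,2}$ and $A'_{1,3}$ are vertices of $G$. To handle this, I would exploit the tangent-through-centre property (Observation~\ref{obs:tangent-center}) applied at the pentagon corners $v_0,\dots,v_4$ and at whichever $A'$-points are genuine crossings: together with the incidence structure forced by the remaining shared vertices of $G$ and the three mutually orthogonal circles $\gamma_0, \gamma_3, \gamma_4$ that the assumption $e_0 \perp e_3$ creates, these constraints should pin down the centres and radii of the five circles so tightly that no realisation in the plane can simultaneously satisfy all the required orthogonalities. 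This final geometric step is where I expect the real difficulty to lie, and it should be designed so that the argument remains uniform across the few combinatorial subcases left by the analysis of the $A'$-points.
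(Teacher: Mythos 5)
Your opening reductions are sound: the five pentagon corners are proper crossings in $D'$ and hence in $D$, so consecutive edges are pairwise orthogonal, and the assumed crossing of $e_0$ and $e_3$ adds a sixth orthogonality. But the proof has a genuine gap. Your contradiction via Lemma~\ref{lem:no4arcs} only closes when, for at least one of your three candidate $4$-tuples, \emph{both} of the remaining required pairs meet at proper crossings. If, say, $A'_{0,2}$ and $A'_{1,3}$ are common vertices of $G$ (a shared endpoint imposes no angle condition), all three $4$-tuples fail simultaneously. This is not an exotic corner case --- it is the generic situation here (the paper even assumes, without loss of generality for the subsequent lemmas, that every $A'_{i-1,i+1}$ is a vertex of $G$) --- and for it you offer only the hope that tangent-through-centre constraints ``should pin down the centres and radii so tightly that no realisation exists.'' That is a research plan, not a proof, and it is exactly where the content of the lemma lives.

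The paper closes this case with a short pencil-of-circles argument that you should compare with. From $e_0\perp e_1$, $e_1\perp e_2$, $e_2\perp e_3$ and the assumed $e_0\perp e_3$, the pairs $\{\cv{e}_0,\cv{e}_2\}$ and $\{\cv{e}_1,\cv{e}_3\}$ form the two sides of an Apollonian family: every circle of one set is orthogonal to every circle of the other. Unless all the circles share a common point (which they do not here), one of the two sets must consist of pairwise \emph{disjoint} circles. But $e_0$ and $e_2$ share the point $A'_{0,2}$ and $e_1$ and $e_3$ share $A'_{1,3}$, so neither pair is disjoint --- contradiction. The decisive feature, which your approach lacks, is that this argument needs only the \emph{existence} of a shared point for each of these two pairs; it is entirely indifferent to whether that point is a right-angle crossing or a common vertex of $G$.
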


\begin{proof}
  Assume that the edges $e_{0}$ and $e_{3}$ cross in $D$ and notice that each of the
  pairs of edges $(e_{0}, e_{1})$, $(e_{1}, e_{2})$, and $(e_{2}, e_{3})$ 
  forms a crossing in $D'$ (see Figure~\ref{fig:0-pentagon-D'}), and
  hence in~$D$, too.
  For any arc $e$, let $\cv{e}$ denote the circle containing $e$.
  Recall that a family of \emph{Apollonian circles} \cite{excursions,
    cfkw-oaooc-GD19} consists of two sets of circles such that each
  circle in one set is orthogonal to each circle in the other set.
  Thus, the pairs of circles $(\cv{e}_{1}, \cv{e}_{3})$ and
  $(\cv{e}_{0}, \cv{e}_{2})$ belong to such a family; the pair
  $(\cv{e}_{1}, \cv{e}_{3})$ belongs to one set of the family and
  $(\cv{e}_{0}, \cv{e}_{2})$ belongs to the other set.
  If not all of the circles in the family share the same point, 
  which is the case for the circles $\cv{e}_{0}$, $\cv{e}_{1}$, $\cv{e}_{2}$, and $\cv{e}_{3}$, then one such set 
  consists of disjoint circles.
  So either the pair $(\cv{e}_{0}, \cv{e}_{2})$ or the pair
  $(\cv{e}_{1}, \cv{e}_{3})$ must consist of disjoint circles.
  This is a contradiction because each of the two pairs shares a point
  in~$D'$ (see Figure~\ref{fig:0-pentagon-D'}), and thus, in~$D$.
\end{proof}

\begin{lemma}
  \label{lem:edges-order}
  In the drawing $D$, it holds that $\ord{e_0}{e_{4},e_{1},e_{2}}$,
  $\ord{e_3}{e_{1},e_{2},e_{4}}$, and, for each $i \in \{1,2,4\}$,
  $\ord{e_i}{e_{i-2},e_{i-1},e_{i+1},e_{i+2}}$.
\end{lemma}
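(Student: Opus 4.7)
The plan is to deduce the lemma from Lemma~\ref{lem:change-order}: if any adjacent pair $(e_i,e_j)$ in one of the orderings $\ord{e_k}{\dots}$ that holds in $D'$ were to flip when we pass back to~$D$, Lemma~\ref{lem:change-order} would force $e_i$ and $e_j$ to form an empty 0-lens in $D$ that intersects $e_k$. So the whole statement reduces to showing that no pair of edges in $E_p$ forms an empty 0-lens in~$D$.

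To rule these out, I would split the $\binom{5}{2}=10$ pairs drawn from $E_p$ into two groups according to whether they share a point in the simplified drawing $D'$. The first group contains the five pentagon-adjacent pairs $(e_i,e_{i+1})$, which cross at a corner of~$p$, together with the four triangle-diagonal pairs $(e_{i-1},e_{i+1})$ for $i\in\{0,1,2,3\}$, which meet at $A'_{i-1,i+1}$ (either as a crossing or as a common vertex, depending on whether $t_i$ is a 0- or 1-triangle). For every such pair, Observation~\ref{obs:share-point-no-lens} directly excludes an empty 0-lens in~$D$. The only remaining pair is $(e_0,e_3)$, which, as one can read off from Figure~\ref{fig:0-pentagon-D'}, is neither pentagon-adjacent nor one of the four triangle diagonals and therefore does not share a point in $D'$. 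Here I would simply invoke Lemma~\ref{lem:edges-0-3}, which tells us that $e_0$ and $e_3$ do not even cross in~$D$; in particular, they cannot bound a lens.

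Putting the two cases together, no pair in $E_p$ forms an empty 0-lens in $D$, so by Lemma~\ref{lem:change-order} every pairwise order among edges of~$E_p$ along any $e_k\in E_p$ is the same in $D$ as in $D'$. Since the three orderings $\ord{e_0}{e_4,e_1,e_2}$, $\ord{e_3}{e_1,e_2,e_4}$, and $\ord{e_i}{e_{i-2},e_{i-1},e_{i+1},e_{i+2}}$ for $i\in\{1,2,4\}$ already hold in $D'$ (as noted just before the lemma), they also hold in~$D$. The only subtlety is making sure the case analysis covers every adjacent pair in the listed orderings, in particular the pair $(e_0,e_3)$ which appears, e.g., inside $\ord{e_1}{e_4,e_0,e_2,e_3}$ and $\ord{e_2}{e_0,e_1,e_3,e_4}$; but this is exactly where Lemma~\ref{lem:edges-0-3} is used, so no further geometric work is needed and the argument is essentially a short bookkeeping step on top of the earlier lemmas.
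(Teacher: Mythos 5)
Your proposal follows essentially the same route as the paper: reduce everything to Lemma~\ref{lem:change-order}, then rule out empty 0-lenses pair by pair via Observation~\ref{obs:share-point-no-lens} for the pairs that share a point in $D'$ and via Lemma~\ref{lem:edges-0-3} for the one remaining pair $(e_0,e_3)$. The only thing you gloss over is that some of the ``shared points'' in the orderings are vertices of $G$ rather than crossings (the points $A'_{i-1,i+1}$ when $t_i$ is a 1-triangle), and Lemma~\ref{lem:change-order} is stated only for two edges that \emph{cross} $e_k$, so your reduction does not literally cover a pair in which one edge is incident to $e_k$ at a graph vertex; the paper dispatches this with one extra sentence (such a shared point is an endpoint of $e_k$, which the simplification process never moves, so its order relative to interior crossings cannot change). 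With that one-line addition your argument matches the paper's proof.
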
  

\begin{proof}
  Recall that in the drawing~$D'$, it holds that $\ord{e_0}{e_{4},e_{1},e_{2}}$,
  $\ord{e_3}{e_{1},e_{2},e_{4}}$, and, for each $i \in \{1,2,4\}$,
  $\ord{e_i}{e_{i-2},e_{i-1},e_{i+1},e_{i+2}}$; see
  Figure~\ref{fig:0-pentagon-D'}.  
  Consider distinct indices $i,j,k \in\{0,1,2,3,4\}$ so that the edges
  $e_i$ and $e_j$ share points
  with $e_k$ in this order in~$D'$, that is, $\ord{e_k}{e_{i}, e_{j}}$ in $D'$.
  We will show that the edges $e_i$ and $e_j$ share points
  with $e_k$ in the same order in~$D$, that is, $\ord{e_k}{e_{i},
    e_{j}}$ in $D$.  In other words, the order in which the edges
  in~$E_p$ share points in~$D$ is the same as in~$D'$.
  
  First, note that if the edge~$e_i$ or the edge~$e_j$ shares an
  endpoint with~$e_k$, then $e_i$ and $e_j$ do not change the order in
  which they share points with~$e_k$.  This is due to the fact that
  the simplification process does not modify the
  graph.  Therefore, $e_i$ and $e_j$ share points with
  $e_k$ in the same order in~$D$ as in~$D'$, that is,
  $\ord{e_k}{e_{i}, e_{j}}$ in $D$.
  
  Assume now that both $e_i$ and $e_j$ cross $e_k$.
  
  If $(i,j) \in \{(0,3),(3,0)\}$, then, according to Lemma~\ref{lem:edges-0-3},
  the edges $e_i$ and $e_j$ do not cross
  in $D$, so they do not form an empty 0-lens in $D$, and thus, by Lemma~\ref{lem:change-order}, 
  $e_i$ and $e_j$ 
  cross
  $e_k$ in the same order in~$D$ as in~$D'$, that is, $\ord{e_k}{e_{i}, e_{j}}$ in $D$.

  Otherwise, the edges $e_i$ and $e_j$ share a point in $D'$; see Figure~\ref{fig:0-pentagon-D'}. 
  Therefore, by
  Observation~\ref{obs:share-point-no-lens}, $e_i$ and $e_j$ do not form
  an empty 0-lens in $D$, and thus, by
  Lemma~\ref{lem:change-order}, $e_i$ and $e_j$ 
  cross
  $e_k$ in the same order in~$D$ as in~$D'$, that is,
  $\ord{e_k}{e_{i}, e_{j}}$ in $D$.
\end{proof}

Thus, we have shown that the order in which
the edges in $E_p$ share points in~$D$ is the same as in~$D'$,
see Figure~\ref{fig:0-pentagon-D}.
We show now that an \arac drawing with this order
does not exist; see Lemma~\ref{lem:0-pentagon-at-most-4-triangles}.
This is the main ingredient to prove
Lemma~\ref{lem:0-pentagon-at-most-4-triangles-main}, which says that a
0-pentagon in a simplified arc-RAC drawing contributes charge to at
most three triangles.

For simplicity of presentation and without loss of generality, we assume
that the points $A'_{i-1, i+1}$ are vertices of $G$, which we denote
by $v_{i-1, i+1}$. 

\begin{lemma}
  \label{lem:0-pentagon-at-most-4-triangles}
  The edges in $E_p$ do not admit an \arac drawing where it holds that
  $\ord{e_0}{e_{4}, e_{1}, e_{2}}$, $\ord{e_3}{e_{1}, e_{2}, e_{4}}$,
  and, for $i \in \{1,2,4\}$,
  $\ord{e_i}{e_{i-2}, e_{i-1}, e_{i+1}, e_{i+2}}$.
\end{lemma}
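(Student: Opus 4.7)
The plan is to assume, for contradiction, that the edges in $E_p$ admit an \arac\ drawing respecting the stated orderings, and to derive a geometric impossibility by normalizing via an inversion centered at $v_{0,2}$. Under this inversion the circles $\cv{e}_0$ and $\cv{e}_2$ become two lines $\ell_0, \ell_2$ (since they pass through the center of inversion), while the circles $\cv{e}_1, \cv{e}_3, \cv{e}_4$ remain circles, which I denote $\tilde e_1, \tilde e_3, \tilde e_4$. Angles, and in particular the five right-angle crossings, are preserved. By Observation~\ref{obs:tangent-center}, a line is orthogonal to a circle exactly when it passes through the circle's center. Applied to the orthogonalities $\ell_0 \perp \tilde e_1$, $\ell_2 \perp \tilde e_1$, $\ell_0 \perp \tilde e_4$, and $\ell_2 \perp \tilde e_3$, this forces $\tilde e_1$ to be centered at $P := \ell_0 \cap \ell_2$, with the centers $\tilde C_4$ and $\tilde C_3$ lying on $\ell_0$ and $\ell_2$, respectively. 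The remaining orthogonality $\tilde e_3 \perp \tilde e_4$ yields the Pythagorean relation $|\tilde C_3 \tilde C_4|^2 = \tilde r_3^2 + \tilde r_4^2$. The three shared graph vertices translate to $\tilde v_{4,1} \in \tilde e_1 \cap \tilde e_4$, $\tilde v_{1,3} \in \tilde e_1 \cap \tilde e_3$, and $\tilde v_{2,4} \in \ell_2 \cap \tilde e_4$, while $v_{0,2}$ is sent to infinity.

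I would then translate the orderings of Lemma~\ref{lem:edges-order} into this reduced picture. The image of $e_0$ is a ray of $\ell_0$ ending at infinity, and $\ord{e_0}{e_4, e_1, e_2}$ forces its crossing with $\tilde e_4$ to precede its crossing with $\tilde e_1$. Similarly, the image of $e_2$ is a ray of $\ell_2$ on which $P$, $\tilde C_3$, and $\tilde v_{2,4}$ appear in a prescribed order. The orderings along $e_1$, $e_3$, and $e_4$ then pin down, for each of the shared endpoints $\tilde v_{4,1}, \tilde v_{1,3}, \tilde v_{2,4}$, which of the four sectors of the arrangement $\ell_0 \cup \ell_2$ contains it, and on which side of $P$ each of $\tilde C_3, \tilde C_4$ lies.

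Finally, I would combine the rigid skeleton with these sector and side constraints and show that the Pythagorean relation $|\tilde C_3 \tilde C_4|^2 = \tilde r_3^2 + \tilde r_4^2$ cannot be realized simultaneously with all the shared-endpoint conditions. Intuitively, the orderings force $\tilde e_4$ to stretch across $\ell_2$ to reach $\tilde v_{2,4}$, while the orderings on $\tilde e_1$ confine $\tilde v_{4,1}$ and $\tilde v_{1,3}$ to specific arcs of $\tilde e_1$; the composite effect leaves $|\tilde C_3 \tilde C_4|$ strictly smaller than $\sqrt{\tilde r_3^2 + \tilde r_4^2}$. The main obstacle is the bookkeeping in this last step: several subcases can arise depending on which sector hosts each shared endpoint and on the relative magnitudes of the radii, and each has to be refuted. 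A natural fallback, should the direct analysis become unwieldy, is to invoke the Apollonian-circles argument of Lemma~\ref{lem:edges-0-3} on suitable triples of the original circles (augmented with an auxiliary circle constructed via Observation~\ref{obs:tangent-center}), which may dispose of some subcases at the outset.
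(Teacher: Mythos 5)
Your setup is sound and runs closely parallel to the paper's: the paper also normalizes by an inversion centered at one of the shared graph vertices (it uses $v_{24}$ rather than $v_{02}$, turning $e_2$ and $e_4$ into rays), and then uses Observation~\ref{obs:tangent-center} to pin the centers of the remaining circles onto the resulting lines, exactly as you do with $\ce{\tilde e_1}=P$, $\tilde C_3\in\ell_2$, $\tilde C_4\in\ell_0$. But the proof stops precisely where the lemma's content begins. The entire difficulty is the step you describe as ``bookkeeping'': one must actually derive a contradiction from the order constraints, and you offer only an unverified intuition that $|\tilde C_3\tilde C_4|$ comes out strictly smaller than $\sqrt{\tilde r_3^2+\tilde r_4^2}$. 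In the paper this step is a genuine argument, not bookkeeping: it first rules out that $\img{e}_1$ (resp.\ $\img{e}_0$) bounds its adjacent triangle concavely, via a tangent-line separation argument using $\ord{e_1}{e_4,e_0,e_2}$, and only then reads off the order of the centers and crossing points along one line to conclude that the circle of $\img{e}_1$ is strictly nested inside that of $\img{e}_3$, contradicting that they must share $\img{v}_{13}$. Note that the two circles it plays off against each other ($\img e_1$ and $\img e_3$) have collinear centers and a forced common point, which is what makes the nesting contradiction clean; your target pair $(\tilde e_3,\tilde e_4)$ has centers on two different lines, so the analogous one-dimensional ordering argument is not available and it is not clear your claimed inequality even holds, let alone in all subcases. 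The ``fallback'' via Apollonian circles augmented with an auxiliary circle is too vague to count as an argument.

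A second, smaller gap: you define $P:=\ell_0\cap\ell_2$ without addressing the degenerate configurations in which this point does not exist, namely when $\cv{e}_0$ and $\cv{e}_2$ are the same circle or are tangent at $v_{02}$, so that $\ell_0$ and $\ell_2$ coincide or are parallel. The paper devotes its entire Case~II to the analogous degeneracy ($e_2$ and $e_4$ on one circle), where the argument is different (all three remaining centers lie on a single line and the contradiction again comes from nesting). Your proposal would need a corresponding case split.
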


\begin{proof}
  Assume that the edges in~$E_p$ admit an \arac drawing where they
  share points in the order indicated above.  For
  $i \in \{0,\dots,4\}$, let $P_{i, i+1}$ be the intersection point of
  $e_{i}$ and $e_{i+1}$; see Figure~\ref{fig:0-pentagon-D}.  Note that
  on~$e_i$, the point $P_{i-1, i}$ is before the point $P_{i, i+1}$
  (due to $\ord{e_i}{e_{i-1}, e_{i+1}}$).

  \mktodo{remove this paragraph}
  Recall that an \emph{inversion}~\cite{excursions} with respect to a
  circle $\alpha$, the \emph{inversion circle}, is a mapping that takes
  any point $P \neq \ce{\alpha}$ to a point $P'$ on the straight-line ray from
  $\ce{\alpha}$ through~$P$ so that $|\ce{\alpha}P'|\cdot|\ce{\alpha}P| =
  \ra{\alpha}^2$.  Inversion maps each circle not passing through
  $\ce{\alpha}$ to another circle and each circle passing through
  $\ce{\alpha}$ to a line.  The center of the inversion circle is
  mapped to the ``point at infinity''.  It is known that inversion
  preserves angles.
  
  We invert the drawing of the edges in~$E_p$ with respect to a small
  inversion circle centered at~$v_{24}$.
  Let $\img{e}_{i}$ be the image of $e_{i}$, $\img{v}_{i-1,i+1}$ be the image of $v_{i-1,i+1}$ 
  ($\img{v}_{24}$ is the point at infinity), and
  $\img{P}_{i,i+1}$ be the image of $P_{i,i+1}$.
  Because in the pre-image the arcs $e_{2}$ and $e_{4}$ pass through $v_{24}$, 
  in the image $\img{e}_{2}$ and $\img{e}_{4}$ are straight-line rays.
  We assume that in the image $\img{e}_{2}$ meets 
  $\img{e}_{4}$ at the point at infinity, that is, at $\img{v}_{24}$. Then, taking into account
  that inversion is a continuous and injective mapping, the order in which the edges in $E_p$ 
  share points is the same in the image.

  We consider two cases regarding whether the edges $e_{2}$ and $e_{4}$ belong to two
  different circles or not.
  
  \medskip\noindent\emph{Case~I:}
  $e_{2}$ and $e_{4}$ belong to two different circles.
  
  One of the intersection points of their circles is $v_{24}$, 
  and we let $X$ denote the other intersection point.
  Here we have that $\img{e}_{2}$ and $\img{e}_{4}$ are two
  straight-line rays meeting at infinity at~$\img{v}_{24}$.
  Their supporting lines are different and intersect 
  at~$\img{X}$, which is the image of $X$; see
  Figure~\ref{fig:0-pentagon-inversion-different}.

  \begin{figure}[tb]
    \begin{subfigure}[b]{0.48\textwidth}
      \centering
      \includegraphics{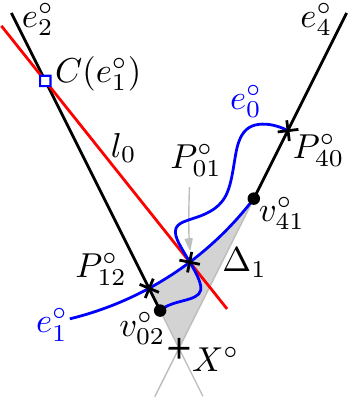}
      \caption{}
      \label{fig:0-new-proof}
    \end{subfigure}
    \hfill
    \begin{subfigure}[b]{0.48\textwidth}
      \centering
      \includegraphics{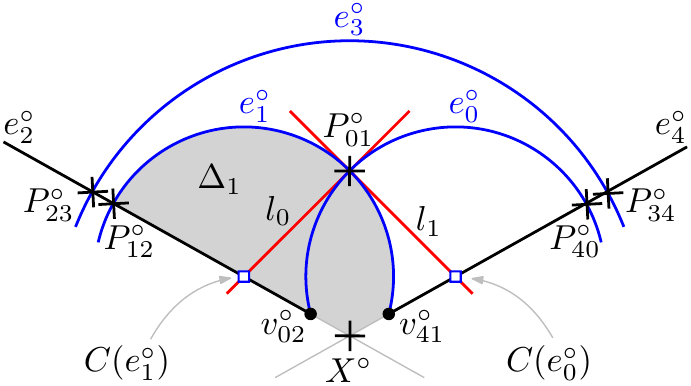}
      \caption{}
      \label{fig:0-new-proof-old-part}
    \end{subfigure}
    \caption{Illustration for the proof of
      Lemma~\ref{lem:0-pentagon-at-most-4-triangles} when
      $e_{2}$ and $e_{4}$ belong to two different circles.
      Image of the inversion with respect to the red circle in
      Figure~\ref{fig:0-pentagon-D}.}
    \label{fig:0-pentagon-inversion-different}
  \end{figure}

  We now assume for a contradiction that the arc $\img{e}_{1}$ forms
  a concave side of the triangle $\Delta_1=\img{P}_{12}\img{v}_{41}\img{X}$;
  see Figure~\ref{fig:0-new-proof} where the triangle is filled gray.
  (Symmetrically, we can show that the
  arc $\img{e}_{0}$ cannot form a concave side of the triangle
  $\Delta_0=\img{P}_{40}\img{v}_{02}\img{X}$.)
  By Observation~\ref{obs:tangent-center}, $\ce{\img{e}_{1}}$ must lie
  on the ray~$\img{e}_2$.  Since we assume that the arc $\img{e}_{1}$
  forms a concave side of the triangle $\Delta_1$, $\ce{\img{e}_{1}}$ and
  $\img{v}_{02}$ are separated by~$\img{P}_{12}$ on $\img{e}_2$.
  Consider the tangent~$l_0$ to $\img{e}_{0}$ at $\img{P}_{01}$.  Again
  in light of Observation~\ref{obs:tangent-center}, $l_0$ has to go
  through $\ce{\img{e}_{1}}$ because $\img{e}_{0}$ and $\img{e}_{1}$
  are orthogonal.
  On the one hand, $\img{v}_{02}$ is to the same side of $l_0$ as
  $\img{P}_{12}$; see Figure~\ref{fig:0-new-proof}.  On the other
  hand, $l_0$ separates $\img{P}_{12}$ and $\img{v}_{41}$ due to
  $\ord{e_1}{e_4,e_0,e_2}$.  Moreover, $l_0$ does not
  separate~$\img{v}_{41}$ and~$\img{P}_{40}$ since it intersects the
  line of~$\img{e}_4$ when leaving the gray triangle~$\Delta_1$.  So
  the two points $\img{v}_{02}$ and $\img{P}_{40}$ of the same arc
  $\img{e}_{0}$ are separated by~$l_0$, which is a tangent of this
  arc; contradiction.
  
  Thus, the arc $\img{e}_{1}$ forms a convex side of the triangle
  $\Delta_1$, and $\img{e}_{0}$ forms a convex side of~$\Delta_0$;
  see Figure~\ref{fig:0-new-proof-old-part}.
  Now, due to Observation~\ref{obs:tangent-center}, 
  $\ce{\img{e}_{0}}$ is between $\img{v}_{41}$ and $\img{P}_{40}$, and
  $\ce{\img{e}_{1}}$ is between $\img{v}_{02}$ and $\img{P}_{12}$,
  because that is where the tangents $l_1$ of $\img{e}_1$ and $l_0$ of
  $\img{e}_0$ in $\img{P}_{01}$
  intersect the lines of $\img{e}_{4}$ and $\img{e}_{2}$, respectively.
  Taking into account that $\ce{\img{e}_{3}} = \img{X}$, because
  $\img{e}_{3}$ is orthogonal to both $\img{e}_{2}$ and $\img{e}_{4}$, we obtain
  that the points
  $\ce{\img{e}_{3}}$, $\ce{\img{e}_{1}}$, $\img{P}_{12}$, $\img{P}_{23}$ 
  appear on the line of $\img{e}_{2}$ in this order.
  Thus, the circle of $\img{e}_{1}$ is contained within the circle of
  $\img{e}_{3}$.  This is a contradiction because $\img{e}_{3}$ and
  $\img{e}_{1}$ must share a point; namely $\img{v}_{13}$.
    
  \medskip\noindent\emph{Case~II:} $e_{2}$ and $e_{4}$ belong to the
  same circle.

  Here
  $\img{e}_{2}$ and $\img{e}_{4}$ are two disjoint straight-line rays
  on the same line~$l$ (meeting at infinity at $\img{v}_{24}$); 
  see Figure~\ref{fig:0-pentagon-inversion-same}.  We
  direct~$l$ as~$\img{e}_4$ and~$\img{e}_2$ (from right to left in
  Figure~\ref{fig:0-pentagon-inversion-same}).  Because $\img{e}_{0}$,
  $\img{e}_{1}$, and $\img{e}_{3}$ are orthogonal to $l$, their
  centers have to be on $l$.  Due to our initial assumption, we have
  $\ord{e_4}{e_2,e_3,e_0,e_1}$ and $\ord{e_2}{e_0,e_1,e_3,e_4}$.  Hence,
  along~$l$, we have $\img{P}_{34}$, $\img{P}_{40}$, $\img{v}_{41}$,
  (on~$\img{e}_4$) and then $\img{v}_{02}$, $\img{P}_{12}$,
  $\img{P}_{23}$ (on~$\img{e}_2$).
  Therefore, the circle of~$\img{e}_1$ is contained in that
  of~$\img{e}_3$.  Hence, $\img{e}_{1}$ does not share a point with
  $\img{e}_{3}$; a contradiction.
\end{proof}

\begin{figure}[tb]
  \centering
  \includegraphics{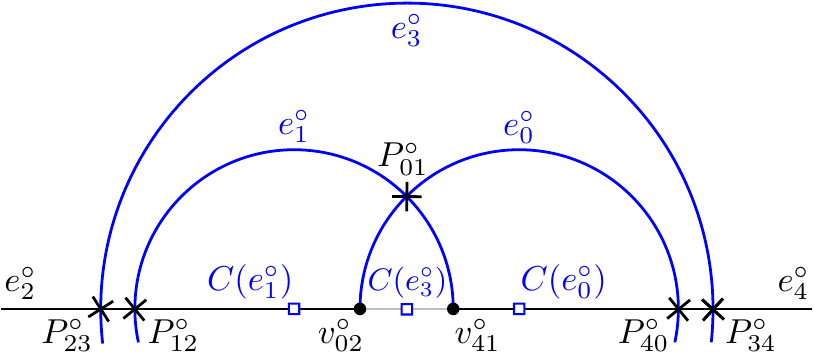}
  \caption{Illustration to the proof of
    Lemma~\ref{lem:0-pentagon-at-most-4-triangles} when $e_{2}$ and
    $e_{4}$ belong to the same circle.  Image of the inversion with
    respect to the red circle in Figure~\ref{fig:0-pentagon-D}.}
  \label{fig:0-pentagon-inversion-same}
\end{figure}

\begin{lemma}
  \label{lem:0-pentagon-at-most-4-triangles-main}
  A 0-pentagon in a simplified \arac drawing contributes charge to at
  most three triangles.
\end{lemma}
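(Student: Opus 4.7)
The plan is to prove Lemma~\ref{lem:0-pentagon-at-most-4-triangles-main} by contradiction, assembling the geometric groundwork already laid in Lemmas~\ref{lem:edges-0-3}--\ref{lem:0-pentagon-at-most-4-triangles}. Suppose some 0-pentagon $p$ in the simplified drawing $D'$ contributes charge to four or more triangles. By the construction of the second charging step these four triangles sit on four consecutive sides of $p$, so I would adopt verbatim the labeling fixed just before Lemma~\ref{lem:edges-0-3}: clockwise-ordered sides $s_0,\dots,s_4$ of $p$, containing edges $e_0,\dots,e_4 \in E$, with $t_i$ the triangle charged over $s_i$ (bounded by $e_{i-1}$ and $e_{i+1}$) for $i\in\{0,1,2,3\}$, and $s_4$ the one possibly uncharged side.

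From the geometry of $p$ together with its four flanking triangles (Figure~\ref{fig:0-pentagon-D'}) one reads off the order in which the edges of $E_p$ share points in~$D'$: namely $\ord{e_0}{e_4,e_1,e_2}$, $\ord{e_3}{e_1,e_2,e_4}$, and $\ord{e_i}{e_{i-2},e_{i-1},e_{i+1},e_{i+2}}$ for $i\in\{1,2,4\}$. By Lemma~\ref{lem:edges-order} the identical combinatorial order is realized in the underlying \arac drawing~$D$, so the restriction of $D$ to the edges of $E_p$ is an \arac drawing achieving precisely these incidences.

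But this is exactly the configuration ruled out by Lemma~\ref{lem:0-pentagon-at-most-4-triangles}: no \arac drawing of the five arcs in $E_p$ can realize that order. The resulting contradiction forces the assumption to fail, so $p$ contributes charge to at most three triangles, as claimed.

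The main obstacle has already been handled by the technical lemma, so the present statement is just a short assembly step and no new difficulty arises here. Two minor points I would verify along the way: first, that choosing $s_4$ as the uncharged side is genuinely without loss of generality, which follows by cyclic symmetry of the five sides around~$p$; second, that the proof is unaffected by whether each shared point $A'_{i-1,i+1}$ is a vertex of $G$ or a crossing of $e_{i-1}$ and $e_{i+1}$, because Observation~\ref{obs:no-new-crossings-or-lenses} guarantees any such crossing is already present in~$D$, and Lemma~\ref{lem:0-pentagon-at-most-4-triangles} only uses that $e_{i-1}$ and $e_{i+1}$ meet at a common point in~$D$.
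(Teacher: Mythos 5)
Your proposal is correct and follows essentially the same route as the paper: it derives a contradiction by combining the labeling and order established before Lemma~\ref{lem:edges-0-3}, the order-preservation result of Lemma~\ref{lem:edges-order}, and the geometric impossibility of Lemma~\ref{lem:0-pentagon-at-most-4-triangles}. The paper handles the vertex-versus-crossing issue for the points $A'_{i-1,i+1}$ by a without-loss-of-generality assumption stated just before Lemma~\ref{lem:0-pentagon-at-most-4-triangles}, but your explicit justification of that point is equally valid.
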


\begin{proof}
  As discussed above, if a 0-pentagon formed by edges $e_0, e_1,
  \dots, e_4$ contributes charge to more than three triangles in a
  simplified drawing (see Figure~\ref{fig:0-pentagon-D'}), then
  this implies the existence 
  of an \arac drawing where it holds that
  $\ord{e_0}{e_{4}, e_{1}, e_{2}}$, $\ord{e_3}{e_{1}, e_{2}, e_{4}}$
  and, for $i \in \{1, 2, 4\}$,
  $\ord{e_i}{e_{i-2}, e_{i-1}, e_{i+1}, e_{i+2}}$;
  see Figure~\ref{fig:0-pentagon-D}.  This, however, contradicts
  Lemma~\ref{lem:0-pentagon-at-most-4-triangles}.
\end{proof}

With the proofs of Lemmas~\ref{lem:1-lenses}~and~\ref{lem:0-pentagon-at-most-4-triangles-main} now in place, the proof of Theorem~\ref{thm:density_upper_bound} is complete.

\section{A Lower Bound for the Maximum Edge Density}
\label{sec:lower-bound}

In this section, we construct a family of \arac graphs with high edge
density.  Our construction is based on a family of \rac{1} graphs of
high edge density that Arikushi et al.~\cite{afkmt-garacd-CGTA12} constructed.
Let $G$ be an embedded graph whose vertices are the vertices of the
hexagonal lattice clipped inside a rectangle; see Figure~\ref{fig:tiling}.
The edges of $G$ are the edges of the lattice and, inside each hexagon
that is bounded by the cycle $(P_0,\ldots,P_5)$, six additional edges
$(P_i, P_{i+2 \bmod 6})$ for $i \in \{0,1,\dots,5\}$; see Figure~\ref{fig:tile}.
We refer to a part of the drawing made up of a single hexagon and its
diagonals as a \emph{tile}.  In Theorem~\ref{thm:density_lower_bound}
below, we show that each hexagon can be drawn as a regular hexagon and its
diagonals can be drawn as two sets of arcs $A = \{\alpha_0, \alpha_1,
\alpha_2\}$ and $B= \{\beta_0, \beta_1, \beta_2\}$, so that the arcs in $A$ are
pairwise orthogonal, the arcs in $B$ are pairwise non-crossing, and
for each arc in $B$ intersecting another arc in $A$ the two arcs are
orthogonal; we use this construction to establish the theorem.
In particular, the arcs in $A$ form the 3-cycle $(P_0,P_2,P_4)$,
and the arcs in $B$ form the 3-cycle $(P_1,P_3,P_5)$.

We first define the radii and centers of the
arcs in a tile and show that they form only orthogonal crossings. 
We use the geometric center of the tile as the origin of our
coordinate system in the following analysis.
We now discuss the arcs in~$A$; then we turn to the arcs in~$B$.
For each $j \in \{0, 1, 2\}$, the arc $\alpha_j$ has radius $r_A = 1$
and center $\ce{\alpha_j} = (d_A\cos(\pi/6 + j \frac{2\pi}{3}),
d_A\sin(\pi/6 + j \frac{2\pi}{3}))$, where $d_A=\sqrt{2/3}$ is the
distance of the centers from the origin;
see Figure~\ref{fig:confA}.

\begin{lemma}
 \label{lem:3circles}
  The arcs in $A$ are pairwise orthogonal.
\end{lemma}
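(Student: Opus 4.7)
The plan is to invoke Observation~\ref{obs:basic}, which says that two circles of radii $r(\alpha)$ and $r(\beta)$ are orthogonal if and only if $r(\alpha)^2 + r(\beta)^2 = |C(\alpha)C(\beta)|^2$. Since all three arcs in $A$ have the same radius $r_A = 1$, it suffices to show that the pairwise distance between any two centers $\ce{\alpha_i}$ and $\ce{\alpha_j}$ equals $\sqrt{2}$.

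First, I would observe that all three centers lie on the circle of radius $d_A = \sqrt{2/3}$ around the origin, with polar angles $\pi/6$, $\pi/6 + 2\pi/3$, and $\pi/6 + 4\pi/3$. Hence any two distinct centers subtend an angle of $2\pi/3$ at the origin. By the law of cosines, the squared distance between any two of them is
\[
  |\ce{\alpha_i}\ce{\alpha_j}|^2 = 2 d_A^2\bigl(1 - \cos(2\pi/3)\bigr)
  = 2 \cdot \tfrac{2}{3} \cdot \tfrac{3}{2} = 2.
\]

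Then I would conclude that $r_A^2 + r_A^2 = 1 + 1 = 2 = |\ce{\alpha_i}\ce{\alpha_j}|^2$, so by Observation~\ref{obs:basic} the underlying circles of $\alpha_i$ and $\alpha_j$ are orthogonal; since the arcs themselves meet (at the vertices $P_0, P_2, P_4$ of the hexagon, or at interior points along their extensions, as indicated in Figure~\ref{fig:confA}), the arcs are pairwise orthogonal as claimed.

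There is no real obstacle here: the statement reduces to a one-line trigonometric identity via the law of cosines, and the only care needed is to match the specific choice of $d_A = \sqrt{2/3}$ against the requirement that the squared distance equal $2r_A^2$. The value $d_A = \sqrt{2/3}$ is, in fact, exactly what is forced by solving $2 d_A^2 (1 - \cos(2\pi/3)) = 2$.
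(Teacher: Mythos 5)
Your proposal is correct and follows essentially the same route as the paper: both compute the pairwise distance between the centers to be $\sqrt{2}$ (you via the law of cosines, the paper via the side length $2d_A\cos(\pi/6)$ of the equilateral triangle of centers) and then invoke Observation~\ref{obs:basic} with $r_A=1$.
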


\begin{proof}
  Consider the equilateral triangle $\triangle
  \ce{\alpha_0}\ce{\alpha_1}\ce{\alpha_2}$ formed by the centers of the three
  arcs in~$A$.  Because the origin is in the center of the triangle,
  the edge length of the triangle is $2 d_A\cos{\pi/6} =
  \sqrt{2}$, and so the distance between the centers of any two
  arcs is $\sqrt{2}$.  The radii of the arcs are~1, hence
  by Observation~\ref{obs:basic},
  every two arcs are orthogonal.
\end{proof}

As in Figure~\ref{fig:confB}, for each $j \in \{0,1,2\}$, the arc
$\beta_j$ has radius $r_B = \sqrt{\frac{70 + 40\sqrt{3}}{6}}$ 
% \todo{$r_B=\sqrt{\frac{35 + 20\sqrt{3}}{3}}$}
and center $\ce{\beta_j} = (d_B\cos(\frac{\pi}{2} + 
\frac{(j+1)2\pi}{3}), d_B\sin(\frac{\pi}{2} +  \frac{(j+1)2\pi}{3}))$,
where $d_B = \sqrt\frac{1}{6} + \sqrt{\frac{73 + 40\sqrt{3}}{6}}$ is
the distance of the centers from the origin.

\begin{lemma}
  \label{lem:3big_circles}
  If an arc in $B$ intersects an
  arc in $A$, then the two arcs are orthogonal.
\end{lemma}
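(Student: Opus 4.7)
The plan is to apply Observation~\ref{obs:basic}: for each pair $(\alpha, \beta)$ with $\alpha \in A$ and $\beta \in B$ whose underlying circles meet, orthogonality reduces to the Pythagorean identity $\ra{\alpha}^2 + \ra{\beta}^2 = |\ce{\alpha}\ce{\beta}|^2$. Since the three $A$-centers lie on a circle of radius $d_A$ about the origin and the three $B$-centers on a circle of radius $d_B$, at angles $\pi/6 + j \cdot 2\pi/3$ and $\pi/2 + (j{+}1)\cdot 2\pi/3$ respectively, every pair consisting of an $A$-center and a $B$-center has angular separation either $\pi/3$ or $\pi$.

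The $\pi$-separated case can be ruled out quickly: there $|\ce{\alpha}\ce{\beta}| = d_A + d_B$, and a direct estimate gives $d_A + d_B > \ra{\alpha} + \ra{\beta}$, so the two underlying circles are disjoint and the arcs certainly do not meet. Hence it suffices to verify orthogonality for $\pi/3$-separated pairs, for which the law of cosines yields
\[
|\ce{\alpha}\ce{\beta}|^2 = d_A^2 + d_B^2 - 2 d_A d_B \cos(\pi/3) = d_A^2 + d_B^2 - d_A d_B.
\]
The algebra then simplifies elegantly thanks to the form $d_B = \sqrt{1/6} + \sqrt{(73 + 40\sqrt{3})/6}$: expanding $d_B^2$ produces a cross term $\tfrac{1}{3}\sqrt{73+40\sqrt{3}}$ that cancels exactly against $-d_A d_B$, because $d_A\sqrt{1/6} = 1/3$ and $d_A\sqrt{(73+40\sqrt{3})/6} = \sqrt{73+40\sqrt{3}}/3$. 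What remains is $(76 + 40\sqrt{3})/6$, which equals $1 + (70+40\sqrt{3})/6 = \ra{\alpha}^2 + \ra{\beta}^2$, and Observation~\ref{obs:basic} gives the conclusion; since the arcs are drawn on these circles, they are also orthogonal at any intersection point.

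The main obstacle is the geometric bookkeeping rather than the algebra: one must organize the six $A$-$B$ center pairs by angular separation and ensure that the $\pi$-separated pairs really do not contribute any arc intersections. An alternative route that avoids this case distinction is to combine the identity $d_A^2 + d_B^2 - d_A d_B = \ra{\alpha}^2 + \ra{\beta}^2$ derived above with the observation that, in the $\pi$-separated case, $|\ce{\alpha}\ce{\beta}|^2 = d_A^2 + d_B^2 + 2 d_A d_B$ exceeds $(\ra{\alpha} + \ra{\beta})^2$ precisely when $3 d_A d_B > 2 \ra{\alpha}\ra{\beta}$, which is a short explicit check with the given values of $d_A$, $d_B$, $\ra{\alpha} = r_A$, and $\ra{\beta} = r_B$. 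Either way, the final calculation collapses because $d_B$ was engineered as a sum of two square roots whose cross term is tuned to cancel against $d_A d_B$.
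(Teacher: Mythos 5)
Your proof is correct and follows essentially the same route as the paper: both reduce orthogonality to the identity $\ra{\alpha}^2+\ra{\beta}^2=|\ce{\alpha}\ce{\beta}|^2$ of Observation~\ref{obs:basic} for the pairs $(\alpha_i,\beta_j)$ with $i\neq j$ (obtaining $(76+40\sqrt{3})/6$ on both sides) and rule out the pairs $(\alpha_i,\beta_i)$ by showing the center distance exceeds $r_A+r_B$. The only difference is presentational --- you organize the six pairs by angular separation and invoke the law of cosines, while the paper computes the center distances directly from the coordinates.
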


\begin{proof}
  Let $i, j \in \{0,1,2\}$. If $j \ne i$, $\|\ce{\alpha_i}
  -\ce{\beta_j}\|^2 =  \frac{76 + 40\sqrt{3}}{6} = 1 + \frac{70 +
    40\sqrt{3}}{6} = r_A^2 + r_B^2$, so
  by Observation~\ref{obs:basic} $\alpha_i$ and $\beta_j$ are 
  orthogonal.  Otherwise, for $i \in \{0,1,2\}$, $\|\ce{\alpha_i}
  -\ce{\beta_i}\| = \sqrt{\frac{112+64\sqrt{3}}{6}} > 1 +
  \sqrt{\frac{70 + 40\sqrt{3}}{6}} = r_A + r_B$,
  so $\alpha_i$ and $\beta_i$ do not intersect.
\end{proof}

\begin{figure}[tb]
    \begin{subfigure}[b]{0.48\textwidth}
    \centering
    \includegraphics{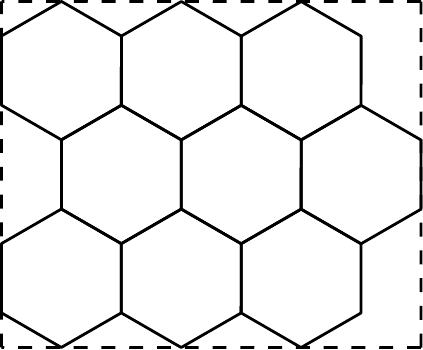}
    \caption{The hexagonal lattice.}
    \label{fig:tiling}
    \end{subfigure}
    \hfill
    \begin{subfigure}[b]{0.48\textwidth}
      \centering
      \includegraphics{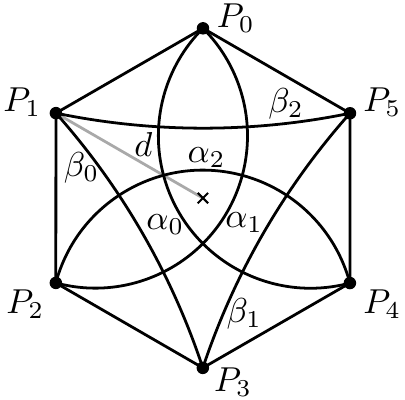}
      \caption{A tile.}
      \label{fig:tile}
    \end{subfigure}
    \caption{Tiling used for the lower-bound construction.}
\end{figure}

\begin{figure}[tb]
    \begin{subfigure}[b]{0.48\textwidth}
      \centering
      \includegraphics{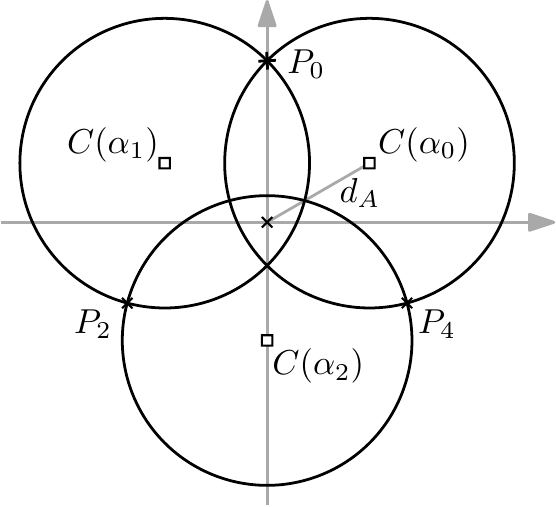}
      \caption{Circles covering the arcs in~$A$.}
      \label{fig:confA}
    \end{subfigure}
    \hfill
    \begin{subfigure}[b]{0.48\textwidth}
      \centering
      \includegraphics{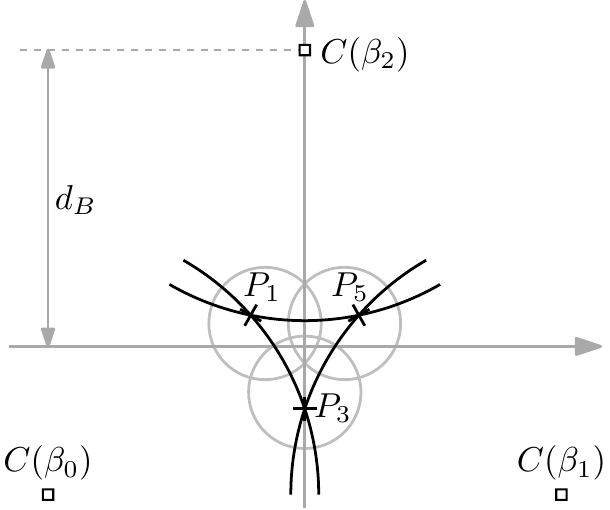}
      \caption{Circles covering the arcs in~$B$.}
      \label{fig:confB}
     \end{subfigure}
    \caption{Construction for the lower bound on the maximum edge density of
      arc-RAC graphs.}
    \label{fig:density_lower_bound}
\end{figure}

\begin{theorem}
  \label{thm:density_lower_bound}
  For infinitely many values of~$n$, there exists an $n$-vertex
  arc-RAC graph with $4.5n - O(\sqrt{n})$ edges.
\end{theorem}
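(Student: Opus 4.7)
The plan is to take $G$ as the subgraph of the hexagonal lattice clipped inside a square region, augmented with the six short diagonals inside every hexagon (as specified in the excerpt), and to realize it geometrically by replacing every hexagon with a copy of the tile from Figures~\ref{fig:confA}--\ref{fig:confB}. The lattice edges themselves are drawn as straight segments (arcs of infinite radius) along the hexagon boundaries. Since the theorem asks only for an asymptotic statement holding for infinitely many $n$, I will parameterize the construction by the side length $s$ of the clipping square and let $s$ range over positive integers.

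The first step is to verify that this drawing is \arac. Within a single tile, Lemma~\ref{lem:3circles} says that the three arcs of $A$ cross pairwise at right angles, and Lemma~\ref{lem:3big_circles} says that the three arcs of $B$ are pairwise disjoint while any crossing between an arc of $A$ and an arc of $B$ is orthogonal. Furthermore, by direct inspection of the chosen radii and centers, each of the six arcs of $A \cup B$ lies inside the closed regular hexagon bounding its tile (the arcs in $B$ are nearly straight chords because $r_B \gg 1$, and the arcs in $A$ connect the even-indexed $P_i$ by bulging toward the interior of the hexagon). Consequently, arcs from different tiles do not cross one another, and the straight lattice edges, which lie on the tile boundaries, do not cross any diagonal arc. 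The only shared points between edges of different tiles are common lattice vertices, which are legitimate endpoint incidences and not crossings.

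The second step is to count vertices and edges. An $s \times s$ patch of the hexagonal lattice contains $h = \Theta(s^2)$ hexagons, of which $\Theta(s) = \Theta(\sqrt{h})$ lie on its boundary. In the infinite lattice each interior hexagon contributes on average $2$ vertices (every one of the $6$ corners is shared with $3$ hexagons) and $9$ edges ($6$ boundary edges shared with $2$ hexagons, plus the $6$ short diagonals that belong to that hexagon alone). The boundary correction amounts to an additive $O(\sqrt{h})$ in both counts. Therefore $n = 2h + O(\sqrt{h})$ and $|E(G)| = 9h + O(\sqrt{h})$; eliminating $h$ yields $|E(G)| = \tfrac{9}{2} n - O(\sqrt{n}) = 4.5\,n - O(\sqrt{n})$, as required.

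The main obstacle will be the routine but finicky verification that every arc is contained in its hexagon. If an arc leaked out of its tile, it could potentially cross an arc of an adjacent tile at a non-orthogonal angle and ruin the \arac property. This containment, however, is forced by the explicit values $r_A = 1$, $d_A = \sqrt{2/3}$, $r_B = \sqrt{(70+40\sqrt{3})/6}$, and $d_B = \sqrt{1/6} + \sqrt{(73+40\sqrt{3})/6}$ together with the three-fold symmetry of the construction, so it can be checked directly. After that, the edge count reduces to the textbook computation of the densities of the hexagonal lattice, adjusted for boundary effects.
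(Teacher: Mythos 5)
Your proposal is correct and follows essentially the same route as the paper: both realize the Arikushi et al.\ hexagonal-lattice-plus-diagonals graph by tiling with the explicit arc configuration, invoke Lemmas~\ref{lem:3circles} and~\ref{lem:3big_circles} for orthogonality within a tile, verify that every arc stays inside its hexagon so that no inter-tile crossings arise, and then count (your per-hexagon count of $2$ vertices and $9$ edges is equivalent to the paper's observation that almost all vertices have degree~$9$). The only substantive step you defer --- containment of the arcs in the hexagon --- is exactly the step the paper carries out via a short tangent argument using Observation~\ref{obs:tangent-center}, so nothing is missing in spirit.
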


\begin{proof}
  We first construct a tile and show that its drawing is
  indeed a valid \arac drawing.  Then it is easy to
  draw an embedded graph $G$ with the claimed edge density.

  Consider two circles $\alpha$ and $\beta$ that intersect in two
  points of different distance from the origin.
  Let~$X^-_{\alpha\beta}$ be the intersection point that is closer to
  the origin, and let~$X^+_{\alpha\beta}$ be the intersection point
  further from the origin.
  
  Let the
  vertices of the hexagon in a tile be $P_0 = X^+_{\alpha_0\alpha_1}$,
  $P_1 = X^-_{\beta_2\beta_0}$, $P_2 = X^+_{\alpha_1\alpha_2}$, $P_3 =
  X^-_{\beta_0\beta_1}$, $P_4 = X^+_{\alpha_2\alpha_0}$, and $P_5 =
  X^-_{\beta_1\beta_2}$.  Due to the symmetric definitions of the arcs,
  the angle between two consecutive vertices of the hexagon is
  $\pi/3$. Moreover, by a simple computation, we see that for
  each $j\in \{0, 1, 2\}$ and with $d = \sqrt{1/2} + \sqrt{1/6}$ being
  the distance of the vertices of the hexagon from the origin, we have:
  \[\begin{array}{l@{\;=\;}l@{\;=\;}l}
      P_{2j} & X^+_{\alpha_j\alpha_{j+1 \bmod 3}} &
      \big(d\cos(\frac{\pi}{2} + j \frac{2\pi}{3}),\,
      d\sin(\frac{\pi}{2} + j \frac{2\pi}{3})\big) \\
      P_{2j+3 \bmod 6} & X^-_{\beta_j\beta_{j+1 \bmod 3}} &
      \big(d\cos(\frac{\pi}{6} + (j+2) \frac{2\pi}{3}),\,
      d\sin(\frac{\pi}{6} + (j+2) \frac{2\pi}{3})\big).
    \end{array}\]   
  Thus, all the vertices of the hexagon are equidistant from its
  center, so the hexagon is regular.  According to
  Lemmas~\ref{lem:3circles} and~\ref{lem:3big_circles} all crossings
  of the arcs that belong to the same tile are orthogonal.  Now we
  argue that the arcs in $A$ and $B$ are contained in the regular
  hexagon.  To this end, we show that the arcs do not intersect the
  relative interior of the edges of the hexagon.  To see this, take, for
  example, the arc $\alpha_2$, which connects~$P_2$ and~$P_4$.  The
  line segment~$P_2P_4$ is orthogonal to the side~$P_1P_2$ of the
  hexagon.  As the center of~$\alpha_2$ is below~$P_2P_4$,
  the tangent of~$\alpha_2$ in~$P_2$ enters the interior of the
  hexagon in~$P_2$.  Thus, $\alpha_2$ does not intersect the relative
  interior of the edge $P_1P_2$ (or of any other edge) of the hexagon.
  Similarly we can show that the arcs in~$B$ do not intersect the
  relative interior of an edge of the hexagon.  Therefore, each tile
  is an \arac drawing, and $G$ is an \arac graph.

  Almost all vertices of the lattice with the exception of at most
  $O(\sqrt{n})$ vertices at the lattice's boundary have
  degree~9~\cite{afkmt-garacd-CGTA12}.  Hence $G$ has $4.5n -
  O(\sqrt{n})$ edges.
\end{proof}

As any $n$-vertex \rac{} graph has at most $4n-10$
edges~\cite{del-dgrac-TCS11}, we obtain the following.

\begin{corollary}
  The \arac graphs are a proper superclass of the \rac{0} graphs.
\end{corollary}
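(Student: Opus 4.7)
The plan is to establish the corollary in two directions: inclusion and strictness. For inclusion, I would first observe that any \rac{0} drawing is, by definition, an \arac drawing: the paper already declares that straight-line segments are treated as arcs of infinite radius, and the right-angle-crossing property is inherited verbatim. Hence every \rac{0} graph is an \arac graph, so the class of \arac graphs contains the class of \rac{0} graphs.

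For strictness, I would use a simple density comparison. Any $n$-vertex \rac{0} graph has at most $4n-10$ edges by the result of Didimo et al.~\cite{del-dgrac-TCS11}. On the other hand, Theorem~\ref{thm:density_lower_bound} provides, for infinitely many values of~$n$, an $n$-vertex \arac graph with $4.5n - O(\sqrt{n})$ edges. Since $4.5n - O(\sqrt{n}) > 4n - 10$ for all sufficiently large~$n$, at least one such graph from the family cannot be a \rac{0} graph. Therefore the inclusion is strict.

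There is really no obstacle here: the whole argument is a one-line comparison of the upper bound for \rac{0} with the lower bound already proved for \arac, combined with the trivial fact that line segments are a special case of circular arcs. The only thing to double-check is that the graphs produced by the construction in Theorem~\ref{thm:density_lower_bound} are simple and that the $O(\sqrt{n})$ correction term is indeed asymptotically smaller than the $0.5n$ gap between the two densities — both of which are immediate from the construction on the hexagonal lattice.
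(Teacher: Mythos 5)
Your proposal is correct and matches the paper's own argument: the paper derives this corollary exactly by combining the trivial inclusion (segments are arcs of infinite radius) with the comparison of the $4n-10$ upper bound for \rac{0} graphs against the $4.5n - O(\sqrt{n})$ lower bound from Theorem~\ref{thm:density_lower_bound}. Nothing further is needed.
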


\section{Open Problems and Conjectures}
\label{sec:open-problems}

An obvious open problem is to tighten the bounds on the edge density
of \arac graphs in Theorems~\ref{thm:density_upper_bound}
and~\ref{thm:density_lower_bound}.  

Another immediate question is the
relation to \rac{1} graphs, which also extend the class of
\rac{0} graphs.  This is especially intriguing as the best known lower
bound for the maximum edge density of \rac{1} graphs is indeed larger than our
lower bound for \arac graphs whereas % potentially
there may be \arac
graphs that are denser than the densest \rac{1} graphs.

The relation between \rac{k} graphs and 1-planar graphs is well understood
\cite{afkmt-garacd-CGTA12,bbhnr-nicpg-DAM17,bdlmm-oracdo1pg-TCS17,bdeklm-radicpg-TCS16,clwz-cdo1pgwracafb-CG19,el-racga1p-DAM13}.
What about the relation between \arac graphs and 1-planar graphs?
In particular, is there a 1-planar graph which is not \arac?

We are also interested in the area required by \arac drawings.
Are there \arac graphs that need exponential area to admit an \arac
drawing?  (A way to measure this off the grid is to consider the ratio
between the longest and the shortest edge in a drawing.)

Finally, the complexity of recognizing \arac graphs is open, but likely NP-hard.

\bibliographystyle{plainurl}
\bibliography{abbrv+refs}

\begin{thebibliography}{10}

\bibitem{ackerman}
Eyal Ackerman.
\newblock On the maximum number of edges in topological graphs with no four
  pairwise crossing edges.
\newblock {\em Discrete Comput. Geom.}, 41(3):365--375, 2009.
\newblock \href {http://dx.doi.org/10.1007/s00454-009-9143-9}
  {\path{doi:10.1007/s00454-009-9143-9}}.

\bibitem{ackerman-tardos}
Eyal Ackerman and G{\'a}bor Tardos.
\newblock On the maximum number of edges in quasi-planar graphs.
\newblock {\em J. Combin. Theory, Ser. A}, 114(3):563--571, 2007.
\newblock \href {http://dx.doi.org/10.1016/j.jcta.2006.08.002}
  {\path{doi:10.1016/j.jcta.2006.08.002}}.

\bibitem{aacjr-twca-gd-2012}
Oswin Aichholzer, Wolfgang Aigner, Franz Aurenhammer, Kate{\v{r}}ina
  {\v{C}}ech~Dobi{\'a}{\v{s}}ov{\'a}, Bert J{\"u}ttler, and G{\"u}nter Rote.
\newblock Triangulations with circular arcs.
\newblock In Marc van Kreveld and Bettina Speckmann, editors, {\em Proc. Graph
  Drawing (GD'11)}, volume 7034 of {\em LNCS}, pages 296--307. Springer, 2012.
\newblock \href {http://dx.doi.org/10.1007/978-3-642-25878-7_29}
  {\path{doi:10.1007/978-3-642-25878-7_29}}.

\bibitem{abfk-rdgobe-2018}
Patrizio Angelini, Michael~A. Bekos, Henry F{\"o}rster, and Michael Kaufmann.
\newblock On {RAC} drawings of graphs with one bend per edge.
\newblock In Therese Biedl and Andreas Kerren, editors, {\em Proc. Graph
  Drawing \& Network Vis. (GD'18)}, volume 11282 of {\em LNCS}, pages 123--136.
  Springer, 2018.
\newblock \href {http://dx.doi.org/10.1007/978-3-030-04414-5_9}
  {\path{doi:10.1007/978-3-030-04414-5_9}}.

\bibitem{arikushi}
Karin Arikushi, Radoslav Fulek, Bal{\'a}azs Keszegh, Filip Mori{\'{c}}, and
  Csaba~D. T{\'o}th.
\newblock Graphs that admit right angle crossing drawings.
\newblock In Dimitrios~M. Thilikos, editor, {\em Proc. Graph Theoretic Concepts
  in Comput. Sci. (WG'10)}, volume 6410 of {\em LNCS}, pages 135--146.
  Springer, 2010.
\newblock \href {http://dx.doi.org/10.1007/978-3-642-16926-7_14}
  {\path{doi:10.1007/978-3-642-16926-7_14}}.

\bibitem{afkmt-garacd-CGTA12}
Karin Arikushi, Radoslav Fulek, Bal{\'a}azs Keszegh, Filip Mori{\'{c}}, and
  Csaba~D. T{\'o}th.
\newblock Graphs that admit right angle crossing drawings.
\newblock {\em Comput. Geom.}, 45(4):169--177, 2012.
\newblock \href {http://dx.doi.org/10.1016/j.comgeo.2011.11.008}
  {\path{doi:10.1016/j.comgeo.2011.11.008}}.

\bibitem{bbhnr-nicpg-DAM17}
Christian Bachmaier, Franz~J. Brandenburg, Kathrin Hanauer, Daniel Neuwirth,
  and Josef Reislhuber.
\newblock {NIC}-planar graphs.
\newblock {\em Discrete Appl. Math.}, 232:23--40, 2017.
\newblock \href {http://dx.doi.org/10.1016/j.dam.2017.08.015}
  {\path{doi:10.1016/j.dam.2017.08.015}}.

\bibitem{bdlmm-oracdo1pg-TCS17}
Michael~A. Bekos, Walter Didimo, Giuseppe Liotta, Saeed Mehrabi, and Fabrizio
  Montecchiani.
\newblock On {RAC} drawings of 1-planar graphs.
\newblock {\em Theoretical Comput. Sci.}, 689:48--57, 2017.
\newblock \href {http://dx.doi.org/10.1016/j.tcs.2017.05.039}
  {\path{doi:10.1016/j.tcs.2017.05.039}}.

\bibitem{bdeklm-radicpg-TCS16}
Franz~J. Brandenburg, Walter Didimo, William~S. Evans, Philipp Kindermann,
  Giuseppe Liotta, and Fabrizio Montecchiani.
\newblock Recognizing and drawing {IC}-planar graphs.
\newblock {\em Theoret. Comput. Sci.}, 636:1--16, 2016.
\newblock URL: \url{https://arxiv.org/abs/1509.00388}, \href
  {http://dx.doi.org/10.1016/j.tcs.2016.04.026}
  {\path{doi:10.1016/j.tcs.2016.04.026}}.

\bibitem{cfkw-oaooc-GD19}
Steven Chaplick, Henry F{\"o}rster, Myroslav Kryven, and Alexander Wolff.
\newblock On arrangements of orthogonal circles.
\newblock In Daniel Archambault and Csaba~D. T{\'o}th, editors, {\em Proc.
  Graph Drawing and Network Visualization (GD'19)}, volume 11904 of {\em LNCS},
  pages 216--229. Springer, 2019.
\newblock URL: \url{https://arxiv.org/abs/1907.08121}, \href
  {http://dx.doi.org/10.1007/978-3-030-35802-0_17}
  {\path{doi:10.1007/978-3-030-35802-0_17}}.

\bibitem{clwz-cdo1pgwracafb-CG19}
Steven Chaplick, Fabian Lipp, Alexander Wolff, and Johannes Zink.
\newblock Compact drawings of 1-planar graphs with right-angle crossings and
  few bends.
\newblock {\em Comput. Geom.}, 84:50--68, 2019.
\newblock Special issue on EuroCG 2018.
\newblock \href {http://dx.doi.org/10.1016/j.comgeo.2019.07.006}
  {\path{doi:10.1016/j.comgeo.2019.07.006}}.

\bibitem{cdgk-dpgwca-DCG01}
C.~C. Cheng, Christian~A. Duncan, Michael~T. Goodrich, and Stephen~G. Kobourov.
\newblock Drawing planar graphs with circular arcs.
\newblock {\em Discrete Comput. Geom.}, 25:405--418, 2001.
\newblock \href {http://dx.doi.org/10.1007/s004540010080}
  {\path{doi:10.1007/s004540010080}}.

\bibitem{del-dgrac-TCS11}
Walter Didimo, Peter Eades, and Giuseppe Liotta.
\newblock Drawing graphs with right angle crossings.
\newblock {\em Theoret. Comput. Sci.}, 412(39):5156--5166, 2011.
\newblock \href {http://dx.doi.org/10.1016/j.tcs.2011.05.025}
  {\path{doi:10.1016/j.tcs.2011.05.025}}.

\bibitem{DBLP:journals/csur/DidimoLM19}
Walter Didimo, Giuseppe Liotta, and Fabrizio Montecchiani.
\newblock A survey on graph drawing beyond planarity.
\newblock {\em {ACM} Comput. Surv.}, 52(1):4:1--4:37, 2019.
\newblock \href {http://dx.doi.org/10.1145/3301281}
  {\path{doi:10.1145/3301281}}.

\bibitem{dgmw-nlacg-2010}
Vida Dujmovi\'{c}, Joachim Gudmundsson, Pat Morin, and Thomas Wolle.
\newblock Notes on large angle crossing graphs.
\newblock In A.~Potanin and A.~Viglas, editors, {\em Proc. Comput. Australasian
  Theory Symp. (CATS'10)}, volume 109 of {\em CRPIT}, pages 19--24. Australian
  Computer Society, 2010.
\newblock URL: \url{http://dl.acm.org/citation.cfm?id=1862317.1862320}.

\bibitem{el-racga1p-DAM13}
Peter Eades and Giuseppe Liotta.
\newblock Right angle crossing graphs and 1-planarity.
\newblock {\em Discrete Appl. Math.}, 161(7):961--969, 2013.
\newblock \href {http://dx.doi.org/10.1016/j.dam.2012.11.019}
  {\path{doi:10.1016/j.dam.2012.11.019}}.

\bibitem{Huang2007}
Weidong Huang.
\newblock Using eye tracking to investigate graph layout effects.
\newblock In Seok{-}Hee Hong and Kwan{-}Liu Ma, editors, {\em Proc.
  Asia-Pacific Symp. Visual. (APVIS'07)}, pages 97--100. IEEE, 2007.
\newblock \href {http://dx.doi.org/10.1109/APVIS.2007.329282}
  {\path{doi:10.1109/APVIS.2007.329282}}.

\bibitem{Huang2014}
Weidong Huang, Peter Eades, and Seok{-}Hee Hong.
\newblock Larger crossing angles make graphs easier to read.
\newblock {\em J. Vis. Lang. Comput.}, 25(4):452--465, 2014.
\newblock \href {http://dx.doi.org/10.1016/j.jvlc.2014.03.001}
  {\path{doi:10.1016/j.jvlc.2014.03.001}}.

\bibitem{Huang2008}
Weidong Huang, Seok{-}Hee Hong, and Peter Eades.
\newblock Effects of crossing angles.
\newblock In {\em Proc. {IEEE} {VGTC} Pacific Visualization (PacificVis'08)},
  pages 41--46, 2008.
\newblock \href {http://dx.doi.org/10.1109/PACIFICVIS.2008.4475457}
  {\path{doi:10.1109/PACIFICVIS.2008.4475457}}.

\bibitem{juenger-book}
Michael J{\"u}nger and Petra Mutzel, editors.
\newblock {\em Graph Drawing Software}.
\newblock Springer, Berlin, Heidelberg, 2004.
\newblock \href {http://dx.doi.org/10.1007/978-3-642-18638-7}
  {\path{doi:10.1007/978-3-642-18638-7}}.

\bibitem{kaufmann-book}
Michael Kaufmann and Dorothea Wagner, editors.
\newblock {\em Drawing Graphs: Methods and Models}.
\newblock Springer, Berlin, Heidelberg, 2001.
\newblock \href {http://dx.doi.org/10.1007/3-540-44969-8}
  {\path{doi:10.1007/3-540-44969-8}}.

\bibitem{krw-dgfa-JGAA19}
{Myroslav} {Kryven}, {Alexander} {Ravsky}, and {Alexander} {Wolff}.
\newblock Drawing graphs on few circles and few spheres.
\newblock {\em J. Graph Algorithms Appl.}, 23(2):371--391, 2019.
\newblock \href {http://dx.doi.org/10.7155/jgaa.00495}
  {\path{doi:10.7155/jgaa.00495}}.

\bibitem{lh-mkgn-IC-2003}
Mark Lombardi and Robert Hobbs, editors.
\newblock {\em Mark Lombardi: Global Networks}.
\newblock Independent Curators, 2003.

\bibitem{excursions}
C.~Stanley Ogilvy.
\newblock {\em Excursions in Geometry}.
\newblock Oxford Univ. Press, New York, 1969.

\bibitem{prt-rptg-2006}
J{\'a}nos Pach, Rado{\v{s}} Radoi{\v{c}}i{\'{c}}, and G{\'e}za T{\'o}th.
\newblock Relaxing planarity for topological graphs.
\newblock In Ervin Gy{\H{o}}ri, Gyula O.~H. Katona, L{\'a}szl{\'o} Lov{\'a}sz,
  and Tam{\'a}s Fleiner, editors, {\em More Sets, Graphs and Numbers: A Salute
  to Vera S{\'o}s and Andr{\'a}s Hajnal}, pages 285--300. Springer Berlin
  Heidelberg, 2006.
\newblock \href {http://dx.doi.org/10.1007/978-3-540-32439-3_12}
  {\path{doi:10.1007/978-3-540-32439-3_12}}.

\bibitem{phnk-otuflgd-gd-2013}
Helen~C. Purchase, John Hamer, Martin N{\"o}llenburg, and Stephen~G. Kobourov.
\newblock On the usability of {Lombardi} graph drawings.
\newblock In Walter Didimo and Maurizio Patrignani, editors, {\em Proc. Graph
  Drawing (GD'12)}, volume 7704 of {\em LNCS}, pages 451--462. Springer, 2013.
\newblock \href {http://dx.doi.org/10.1007/978-3-642-36763-2_40}
  {\path{doi:10.1007/978-3-642-36763-2_40}}.

\bibitem{s-dgfa-JGAA15}
Andr{\'e} Schulz.
\newblock Drawing graphs with few arcs.
\newblock {\em J. Graph Algorithms Appl.}, 19(1):393--412, 2015.
\newblock \href {http://dx.doi.org/10.7155/jgaa.00366}
  {\path{doi:10.7155/jgaa.00366}}.

\bibitem{xrph-ausoceigv-DRAI2-12}
Kai Xu, Chris Rooney, Peter Passmore, and Dong-Han Ham.
\newblock A user study on curved edges in graph visualisation.
\newblock In Philip Cox, Beryl Plimmer, and Peter Rodgers, editors, {\em Proc.
  Theory Appl. Diagrams (DIAGRAMS'10)}, volume 7352 of {\em LNCS}, pages
  306--308. Springer, 2012.
\newblock \href {http://dx.doi.org/10.1007/978-3-642-31223-6_34}
  {\path{doi:10.1007/978-3-642-31223-6_34}}.

\end{thebibliography}

\end{document}